\begin{document}
\title{Splitting and Matrix Exponential approach for jump-diffusion models with Inverse Normal Gaussian, Hyperbolic and Meixner jumps}

\author{Andrey Itkin}

\affil{\small New York University, School of Engineering,\\
6 Metro Tech Center, RH 517E, Brooklyn NY 11201, USA \\
aitkin@nyu.edu}

\date{\today}

\maketitle

\begin{abstract}
This paper is a further extension of the method proposed in \cite{Itkin2014} as applied to another set of jump-diffusion models: Inverse Normal Gaussian, Hyperbolic and Meixner. 
To solve the corresponding PIDEs we accomplish few steps. First, a second-order operator splitting on financial processes (diffusion and jumps) is applied to these PIDEs. To solve the diffusion equation, we use standard finite-difference methods. 
For the jump part, we transform the jump integral into a pseudo-differential operator and construct its second order approximation on a grid which supersets the grid that we used for the diffusion part. The proposed schemes are unconditionally stable in time and preserve positivity of the solution which is computed either via a matrix exponential, or via P{\'a}de approximation of the matrix exponent. Various numerical experiments are provided to justify these results.

\Keywords{jump-diffusion, PIDE, splitting, matrix exponential, unconditionally stable schemes}
\end{abstract}

\section{Introduction}
This paper is a further extension of the method proposed in \cite{Itkin2014} as applied to another set of jump-diffusion models: Inverse Normal Gaussian, Hyperbolic and Meixner. These models have been introduced in mathematical finance within last two decades, see \cite{NIG,EberleinKeller:95,EberleinKellerPrause:98,Schoutens01}. However, to the best of author's knowledge it seems they received less attention of practitioners as compared, e.g., with Merton, Kou and CGMY/KoBoL models, see \cite{Itkin2014} for a short survey of the latter models and references therein. At the first glance this looks unfair, because typical distributions of returns produced by the former models fit the observed market data (and, in particular, fat tails and skew) even better than their more popular counterparts. One of the possible reasons could be that despite the pdf and the characteristic function (CF) of these models are known in closed form, and, therefore, pricing of plain vanilla and even American options could be done transform methods (FFT, cosine, adaptive integration in the Fourier space), see \cite{CarrMadan:99a,Lewis:2001, Lipton2001, FangOosterlee2008,Lord2007}, the analytic expressions for the pdf and CF are more complicated than that for their counterparts, and sometimes require usage of special functions. However, the latter doesn't prevent pricing and hedging of simple vanilla instruments to be efficiently done.

The second point is that the considered models are pure jump models that don't contain a diffusion component. This, however, could be easy relaxed.

On the other hand, nowadays practitioners want to have a model which is capable to simultaneously fit market data on both vanilla and exotic options. For doing that they need to consider kind of a stochastic local volatility (LSV) model, or even a LSV model with jumps. Under these conditions neither the pdf, nor the CF are available in closed form. Therefore, efficient numerical methods should be used to solve the pricing equations which belong to the class of Partial Integro-Differential Equations (PIDE).

A number of methods were proposed to address such a construction, see \cite{Itkin2014} and references therein as well as discussion of problems related to their implementation. In particular, they include a discretization of the PIDE that is implicit in the differential terms and explicit in the integral term (\cite{ContVolchkova2003}), Picard iterations for computing the integral equation (\cite{Halluin2004, Halluin2005a}) and a second-order accurate, unconditionally-stable operator splitting (ADI) method that does not require an iterative solution of an algebraic equation at each time step (\cite{AA2000}). Various forms of operator splitting technique were also used for this purpose (\cite{ItkinCarr2012Kinky, Itkin2014}). In this paper, we will review operator splitting on financial processes in more detail.

Assuming that an efficient discretization of the PIDE in time was properly chosen, the remaining problem is a fast computation of the jump integral, as it was observed to be relatively expensive. A short survey of various proposed method including review of their advantages and disadvantages again could be found in \cite{Itkin2014}. Also in that paper the original method proposed in \cite{ItkinCarr2012Kinky} was further elaborated on exploiting the following idea. First, we use an operator-splitting method on the financial processes, thus separating the computation of the diffusion part from the integral part. Then, similar to \cite{ItkinCarr2012Kinky}, we represent the jump integral in the form of a pseudo-differential operator. Next we formally solve the obtained evolutionary partial pseudo-differential equations via a matrix exponential.

In \cite{Itkin2014} it was shown that the matrix exponential can be efficiently computed for Merton's, Kou's and CGMY models, and that the efficiency of this method is not worse than that of the FFT. It was also mentioned that the proposed method is almost universal, i.e., it allows computation of PIDEs for various jump-diffusion models in a unified form. It is also important that this method is relatively simple for implementation.

In the present paper we apply this approach to the Inverse Normal Gaussian, Hyperbolic and Meixner models. We construct finite difference schemes to solve the corresponding PIDEs and prove that all the proposed schemes demonstrate the second order convergence in space and time and are unconditionally stable. The suggested approach is new and eliminates some known limitations of the existing methods, see discussion in \cite{Itkin2014}. Also, for the first time splitting and matrix exponential method is used as applied to the referenced jump models. This allows an efficient usage of these models in a more complex framework, e.g., the LSV model with jumps. Furthermore, the complexity of solving a pure jump evolutionary equation for the Meixner model using the new method is close to $O(N)$ which is better than that of the FFT.

Finally, as the distributions underlying the corresponding L\'evy processes are capable to fit well the market data, using these jump models together with the efficient numerical method for solving the jump-diffusion PIDEs potentially gives rise to a more efficient pricing and hedging of the derivative instruments.

Also to underline, despite the general idea of the method  has been already described in \cite{ItkinCarr2012Kinky, Itkin2014} constructing a particular discretization of the jump operators could be different for every model. This requires the corresponding Propositions to be proved in every case to demonstrate approximation, stability and complexity of the method. Therefore, these schemes, Propositions and proofs are the main contributions  of this paper.

The rest of the paper is organized as follows. In section~\ref{Sec2} we briefly discuss a general form of a backward PIDE for the class of L{\'e}vy models. In Section~\ref{Sec4} we present our general approach to the solution of the PIDE using splitting and the matrix exponential approach. An explicit construction of various finite-difference schemes of the first and second order is presented in the next section. The results presented in that section are new, and to the best of our knowledge have not been discussed in the literature. Our technique utilizes some results from matrix analysis related to definitions of M-matrices, Metzler matrices and eventually exponentially nonnegative matrices. We also provide the results of various numerical tests to demonstrate convergence of our method. The final section concludes.

\section{L{\'e}vy Models and Backward PIDE } \label{Sec2}
To avoid uncertainty, let us consider the problem of pricing equity options written on a single stock. As we will see, this specification does not cause us to lose any generality, but it makes the description more practical. We assume an underlying asset (stock) price $S_t$ be driven by an exponential of a L{\'e}vy process
\begin{equation} \label{Levy}
S_t = S_0 \exp (L_t), \quad 0 \le t \le T,
\end{equation}
where $t$ is time, $T$ is option expiration, $S_0 = S_t \ |_{t=0}$, $L_t$ is the L{\'e}vy process $L = (L_t)_{0 \le t \le T}$ with a nonzero Brownian (diffusion) part. Under the pricing measure, $L_t$ is given by
\begin{equation} \label{Lt}
L_t = \gamma t  + \sigma W_t + Y_t, \qquad \gamma, \sigma \in \mathbb{R}, \quad \sigma > 0,
\end{equation}
with  \LY triplet $(\gamma, \sigma, \nu$), where $W_t$ is a standard Brownian motion on $0 \le t \le T$ and $Y_t$ is a pure jump process.
We consider this process under the pricing measure, and therefore $e^{-(r-q) t} S_t$  is a martingale, where $r$ is the interest rate and $q$ is a continuous dividend. This allows us to express $\gamma$ as (\cite{Eberlein2009})
\[
\gamma = r - q - \frac{\sigma^2}{2} - \int_\mathbb{R} \left(e^x -1 -x {\bf 1}_{|x| < 1}\right)\nu(dx),
\]
\noindent where $\nu(dx)$ is a \LY measure which satisfies
\[ \int_{|x| > 1}e^x \nu(dx) < \infty.  \]

We leave $\nu(dx)$ unspecified at this time, because we are open to consider all types of jumps including those with finite and infinite variation, and finite and infinite activity.
\footnote{We recall that a standard Brownian motion already has paths of infinite variation. Therefore, the \LY process in \eqref{Lt} has infinite variation since it contains a continuous martingale component. However, here we refer to the infinite variation that comes from the jumps.}

To price options written on the underlying process $S_t$, we want to derive a PIDE that describes time evolution of the European option prices $C(x,t), \ x \equiv \log (S_t/S_0)$. Using a standard martingale approach, or by creating a self-financing portfolio, one can derive the corresponding PIDE (\cite{ContTankov})
\begin{multline} \label{PIDE}
r C(x,t) = \fp{C(x,t)}{t} + \left(r-\frac{1}{2}\sigma^2 \right) \fp{C(x,t)}{x} + \frac{1}{2}\sigma^2 \sop{C(x,t)}{x} \\
+  \int_\mathbb{R}\left[ C(x+y,t) - C(x,t) - (e^y-1)\fp{C(x,t)}{x} \right] \nu(dy)
\end{multline}
for all $(x,t) \in \mathbb{R} \times (0,T)$, subject to the terminal condition
\begin{equation}
C(x,T) = h(x),
\end{equation}
where $h(x)$ is the option payoff, and some boundary conditions which depend on the type of the option. The solutions of this PIDE usually belong to the class of viscosity solutions (\cite{ContTankov}).

We now rewrite the integral term using the following idea. It is well known from quantum mechanics (\cite{OMQM}) that a translation (shift) operator in $L_2$ space could be represented as
\begin{equation} \label{transform}
    \mathcal{T}_b = \exp \left( b \dfrac{\partial}{\partial x} \right),
\end{equation}
with $b$ = const, so
\[ \mathcal{T}_b f(x) = f(x+b). \]

Therefore, the integral in \eqref{PIDE} can be formally rewritten as
\begin{align} \label{intGen}
\int_\mathbb{R} \left[ C(x+y,t) \right. & \left. - C(x,t) - (e^y-1) \fp{C(x,t)}{x} \right] \nu(dy) =  \mathcal{J} C(x,t), \\
\mathcal{J} & \equiv \int_\mathbb{R}\left[
\exp \left( y \dfrac{\partial}{\partial x} \right) - 1 - (e^y-1) \fp{}{x} \right] \nu(dy). \nonumber
\end{align}

In the definition of operator $\mathcal{J}$ (which is actually an infinitesimal generator of the jump process), the integral can be formally computed under some mild assumptions about existence and convergence if one treats the term $\partial/ \partial x$ as a constant. Therefore, operator $\mathcal{J}$ can be considered as some generalized function of the differential operator $\partial_x$. We can also treat $\mathcal{J}$ as a pseudo-differential operator.

For the future, an important observation is that
\begin{equation} \label{MGT}
\mathcal{J} = \phi(-i \partial_x)
\end{equation}
\noindent where $\phi(u)$ is the characteristic exponent of the jump process. This directly follows from the L{\'e}vy-Khinchine theorem. Note, that thus defined characteristic exponent is computed using the expectation under a risk-neutral measure. In other words, the last term in the definition of $\mathcal{J}$ is a compensator which is added to make the forward price to be a true martingale under this measure.

With allowance for this representation, the whole PIDE in the \eqref{PIDE} can be rewritten in operator form as
\begin{equation} \label{oper}
\partial_\tau C(x,\tau) = [\mathcal{D} + \mathcal{J}]C(x,\tau),
\end{equation}
\noindent where $\tau = T - t$ and $\mathcal{D}$ represents a differential (parabolic) operator
\begin{equation}
\mathcal{D} \equiv - r + \left(r-\frac{1}{2}\sigma^2 \right) \fp{}{x} + \frac{1}{2}\sigma^2 \sop{}{x},
\end{equation}
 where the operator $\mathcal{D}$ is an infinitesimal generator of diffusion.

Notice that for jumps with finite variation and finite activity, the last two terms in the definition of the jump integral $\mathcal{J}$ in \eqref{PIDE} could be integrated out and added to the definition of $\mathcal{D}$. In the case of jumps with finite variation and infinite activity, the last term could be integrated out. However, here we will leave these terms under the integral for two reasons: i) this transformation (moving some terms under the integral to the diffusion operator) does not affect our method of computation of the integral, and ii) adding these terms to the operator $\mathcal{D}$ could potentially negatively influence the stability of the finite-difference scheme used to solve the parabolic equation
$\mathcal{D} C(x,t) = 0$. This equation naturally appears as a part of our splitting method, which is discussed in the next section.

\section{Operator Splitting and Solution of Jump Equations} \label{Sec4}
To solve \eqref{oper} we use splitting. This technique is also known as the method of fractional steps (see \cite{yanenko1971, samarski1964, dyakonov1964}) and sometimes is cited in financial literature as Russian splitting or locally one-dimensional schemes (LOD) (\cite{Duffy}).

Below we follow \cite{Itkin2014} to give a short survey of this technique as applied to linear and nonlinear PDEs.

The method of fractional steps reduces the solution of the original $k$-dimensional unsteady problem to the solution of $k$ one-dimensional equations per time step. For example, consider a two-dimensional diffusion equation with a solution obtained by using some finite-difference method. At every time step, a standard discretization on space variables is applied, such that the finite-difference grid contains $N_1$ nodes in the first dimension and $N_2$ nodes in the second dimension. Then the problem is solving a system of $N_1 \times N_2$ linear equations, and the matrix of this system is block-diagonal. In contrast, utilization of splitting results in, e.g.,  $N_1$ systems of $N_2$ linear equations, where the matrix of each system is banded (tridiagonal). The latter approach is easy to implement and, more importantly, provides significantly better performance.

The previous procedure uses operator splitting in different dimensions. \cite{marchuk1975} and then \cite{Strang1968} extended this idea for complex physical processes (for instance, diffusion in the chemically reacting gas, or the advection-diffusion problem). In addition to (or instead of) splitting on spatial coordinates, they also proposed splitting the equation into physical processes that differ in nature, for instance, convection and diffusion. This idea becomes especially efficient if the characteristic times of evolution (relaxation time) of such processes are significantly different.

For the PIDE in \eqref{PIDE} we use a version of splitting described in \cite{Itkin2014}
which gives rise to the following numerical scheme:
\begin{align} \label{splitFin}
C^{(1)}(x,\tau) &= e^{\frac{\Delta \tau}{2} \mathcal{D} } C(x,\tau), \\
C^{(2)}(x,\tau) &= e^{\Delta \tau \mathcal{J}} C^{(1)}(x,\tau) \nonumber, \\
C(x,\tau+\Delta \tau) &= e^{\frac{\Delta \tau}{2} \mathcal{D} } C^{(2)}(x,\tau).  \nonumber
\end{align}

Thus, instead of an unsteady PIDE, we obtain one PIDE with no drift and diffusion (the second equation in \eqref{splitFin}) and two unsteady PDEs (the first and third ones in \eqref{splitFin}).

In what follows, we consider how to efficiently solve the second equation, while assuming that the solution of the first and the third equations can be obtained using any finite-difference method that is sufficiently efficient. To this end, in various examples given in the next sections we will explicitly mention what particular method was used for this purpose.

To solve the second (pure jump) evolutionary equation we again follow the method of \cite{Itkin2014}. By definition of the jump generator $\mathcal{J}$, under some mild constraints on its existence, $\mathcal{J}$ could be viewed as a function of the operator $\partial_x$. Therefore, solving the integral (second) equation in \eqref{splitFin} requires a few steps.

First, we construct an appropriate discrete grid ${\bf G}(x)$ in the truncated (originally infinite) space domain. This grid could be nonuniform. An important point is that in the space domain where the parabolic equations of \eqref{splitFin} are defined, this grid should coincide with the finite-difference grid constructed for the solution of these parabolic equations. In other words, the PIDE grid is a superset of the PDE grid. This is useful to avoid interpolation of the solution that is obtained on the jump grid (the second step of the splitting algorithm) to the diffusion grid that is constructed to obtain solutions at the first and third splitting steps.

For the sake of concreteness let the parabolic equation be solved at the space domain $[x_0,x_k],$ $\ x_0 > -\infty$, $x_k < \infty$ using a nonuniform grid with $k+1$ nodes ($x_0, x_1,...,x_k$) and space steps $h_1 = x_1-x_0, ..., h_k = x_k - x_{k-1}$. The particular choice of $x_0$ and $x_k$ is determined by the problem under consideration. We certainly want $|x_0|$ and  $|x_k|$ not to be too large.  The integration limits of $\mathcal{J}$ in \eqref{intGen} are, however,  plus and minus infinity. Truncation of these limits usually is done to fit memory and performance requirements. On the other hand, we want a fine grid close to the option strike $K$ for better accuracy. Therefore, a reasonable way to construct a jump grid is as follows. For $x_0 \le x \le x_k$, the jump grid coincides with the grid used for solution of the parabolic PDEs. Outside of this domain, the grid is expanded by adding nonuniform steps; i.e., the entire jump grid is $x_{-K}, x_{1-K}, ... x_{-1}, x_0, x_1, ..., x_k, x_{k+1}, ..., x_{k+M}$. Here $K >0, \ M>0$ are some integer numbers that are chosen based on our preferences. Since contribution to $\mathcal{J}$  from very large values of $x$ is negligible, the outer grid points $x_{-K}, x_{1-K}, ... x_{-1}$ and $x_{k+1}, ..., x_{k+M}$ can be made highly nonuniform. One possible algorithm could be to have the steps of these grids be a geometric progression. This allows one to cover the truncated infinite interval with a reasonably small number of nodes.

Second, the discretization of $\partial_x$ should be chosen on ${\bf G}(x)$. We want this discretization to:
\begin{enumerate}
\item Provide the necessary order of approximation of the whole operator $\mathcal{J}$ in space.
\item Provide unconditional stability of the solution of the second equation in \eqref{splitFin}.
\item Provide positivity of the solution.
\end{enumerate}

Let  $\Delta_x$ denote a discrete analog of $\partial_x$ obtained by discretization of $\partial_x$ on the grid ${\bf G}(x)$. Accordingly, let us define the matrix $J(\Delta_x)$ to be the discrete analog of the operator $\mathcal{J}$ on the grid ${\bf G}(x)$. In \cite{Itkin2014}
the following proposition is proven that translates the above requirements to the conditions on $J(\Delta_x)$.
\begin{proposition} \label{prop0}
The finite-difference scheme
\begin{equation} \label{fd0}
C(x, \tau + \Delta \tau) = e^{\Delta \tau J(\Delta_x)} C(x,\tau)
\end{equation}
is unconditionally stable in time $\tau$ and preserves positivity of the vector $C(x,\tau)$ if there exists an M-matrix $B$ such that $J(\Delta_x) = - B$.
\end{proposition}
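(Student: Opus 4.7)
My plan is to translate each of the two required properties---positivity preservation and unconditional stability---into a structural property of the matrix $J\equiv J(\Delta_x)=-B$, and then derive both from the M-matrix hypothesis on $B$. I would begin with the standard characterization of an M-matrix: $B = sI - A$ for some scalar $s\geq \rho(A)$ and some entrywise nonnegative matrix $A$; equivalently, $B$ has nonpositive off-diagonal entries and its spectrum lies in the closed right half-plane. Consequently $J = A - sI$ is a Metzler matrix (its off-diagonal entries are nonnegative) whose spectrum lies in the closed left half-plane.

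For positivity, I would exploit the commutation of $A$ and $-sI$ to write
\[ e^{\Delta\tau J} \;=\; e^{-s\Delta\tau}\,e^{\Delta\tau A}, \]
and observe that $e^{\Delta\tau A}=\sum_{k\geq 0}(\Delta\tau)^k A^k/k!$ is entrywise nonnegative because $A$ is, while $e^{-s\Delta\tau}>0$. Hence $e^{\Delta\tau J}$ is an entrywise nonnegative matrix, so $C(x,\tau)\geq 0$ componentwise implies the same for $C(x,\tau+\Delta\tau)=e^{\Delta\tau J}C(x,\tau)$. This is precisely the statement that Metzler matrices are exponentially nonnegative, and it is the content one needs for the third requirement in the list preceding the proposition.

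For unconditional stability, I would bound the propagator's norm uniformly in $\Delta\tau$. Every eigenvalue $\lambda$ of $J$ satisfies $\mathrm{Re}\,\lambda\leq 0$, so $|e^{\Delta\tau\lambda}|\leq 1$. In the diagonalizable case $J=P\Lambda P^{-1}$ this yields $\|e^{\Delta\tau J}\|\leq \kappa(P)$, a bound independent of $\Delta\tau$; the general case is handled via the Jordan form, noting that for a nonsingular M-matrix all eigenvalues of $B$ in fact have strictly positive real parts, so no nontrivial Jordan block sits on the imaginary axis. Consequently $\|(e^{\Delta\tau J})^n\|=\|e^{n\Delta\tau J}\|$ remains bounded for $n\Delta\tau\in[0,T]$, which is Lax--Richtmyer stability.

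I expect the stability step to be the main obstacle, since one must be careful that ``unconditional stability'' here refers to uniform boundedness of the propagator rather than strict contractivity $\|e^{\Delta\tau J}\|\leq 1$, which need not hold for a non-normal Metzler matrix. The positivity step, by contrast, reduces cleanly to the splitting $J=A-sI$ and the elementary series argument above.
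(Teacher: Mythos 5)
Your proposal is correct. Note first that the paper does not actually prove this proposition itself --- it imports it from Itkin (2014) --- so the natural comparison is with that cited argument and with the way the paper applies the proposition in its own appendix proofs. Your positivity step (write $B=sI-A$ with $A\ge 0$ and $s\ge\rho(A)$, so $e^{\Delta\tau J}=e^{-s\Delta\tau}e^{\Delta\tau A}\ge 0$) is exactly the standard route taken there: $-B$ is Metzler, hence exponentially nonnegative. Where you genuinely diverge is in the stability step, and in fact you are more careful than the source. The paper's own appendix arguments (e.g.\ the proof of Proposition~\ref{propNIG+}) identify the spectral norm of $e^{\Delta\tau J}$ with $\max_i|\lambda_i|$ and conclude $\|e^{\Delta\tau J}\|<1$; that identification is only valid for normal matrices, and the discretization matrices here (one-sided differences, triangular/Toeplitz structure) are not normal. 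Your version --- bounding $\|e^{n\Delta\tau J}\|$ by $\kappa(P)$ (or via the Jordan form when $B$ is singular) and settling for uniform boundedness in the Lax--Richtmyer sense rather than contractivity --- closes that gap at the cost of a possibly large, $\Delta\tau$-independent constant. One small caveat: the proposition as stated allows $B$ to be a singular M-matrix, so you should not lean on ``strictly positive real parts''; but since $(e^{\Delta\tau J})^n=e^{n\Delta\tau J}$ and $t\mapsto\|e^{tJ}\|$ is continuous on the compact interval $[0,T]$, boundedness holds even with a Jordan block at the origin, so your conclusion survives.
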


This proposition gives us a recipe for the construction of the appropriate discretization of the operator $\mathcal{J}$. In the next section, we will give some explicit examples of this approach.

Once the discretization is performed, all we need is  to compute a matrix exponential $ e^{\Delta \tau J(\Delta_x)}$, and then a product of this exponential with $C(x,\tau)$. The following facts make this method competitive with those briefly described in the introduction. We need to take into account that:
\begin{enumerate}
\item The matrix $J(\Delta_x)$ can be precomputed once the finite-difference grid ${\bf G}(x)$ has been built.

\item If a constant time step is used for computations, the  matrix $\mathcal{A} = e^{\Delta \tau J(\Delta_x)}$ can also be precomputed.
\end{enumerate}

If the above two statements are true, the second splitting step results in computing a product of a matrix with time-independent entries and a vector. The complexity of this operation is $O(N^2)$, assuming the matrix $\mathcal{A}$ is $N \times N$, and the vector is $N \times 1$.
However, $N$ in this case is relatively small (see some numerical examples and estimates in \cite{Itkin2014}). Also the product $\mathcal{A} C(x,\tau)$ can be computed using FFT, if at every time step one re-interpolates values from ${\bf G}(x)$ to the FFT grid, similar to how this was done in \cite{Halluin2004}. This reduces the total complexity to $O(N \log_2 N)$.

In some special cases (Merton's jump model, Kou model) the product $\mathcal{A} C(x,\tau)$ could be computed with the complexity $O(N)$ using some tricks proposed in \cite{Itkin2014}. We will further exploit this idea for some models described in this paper.

The above consideration is sufficiently general in the sense that it covers any particular jump model where jumps are modeled as an exponential \LY process. Below
we review three jump models: Inverse Normal Gaussian, Hyperbolic and Meixner. Given a model, our goal is to construct a finite-difference scheme, first for $\Delta_x$, and then for $J(\Delta_x)$, that satisfies the conditions of Proposition~\ref{prop0}. Again we want to emphasize that we discuss these jump models being a part of a more general either LV or LSV model with jumps. Otherwise, as the CF of the considered models are known in closed form, any FFT based method would be more efficient in, e.g., obtaining prices of European vanilla options.

\subsection{Normal Inverse Gaussian Model (NIG)} \label{NIGsect}

The NIG type L\'evy process was introduced by \cite{NIG} as a model for log returns of stock prices. It is a sub-class of the more general class of hyperbolic L\'evy processes that we will consider in the next section. \cite{NIG} considered classes of normal variance-mean
mixtures and defined the NIG distribution as the case when the mixing distribution is inverse Gaussian with the characteristic exponent

\begin{equation} \label{NIG_CE}
\phi_{NIG}(\alpha, \beta,\delta,u) = \delta \left( \sqrt{\alpha^2 - \beta^2} - \sqrt{\alpha^2 - (\beta+ i u)^2} \right).
\end{equation}
Therefore, the CF function reads
\begin{equation} \label{NIG_CF}
\Phi_{NIG}(\alpha, \beta,\delta,\mu,u) = \exp\left\{ t \delta \left( \sqrt{\alpha^2 - \beta^2} - \sqrt{\alpha^2 - (\beta+ i u)^2} \right) + i t u \mu \right\},
\end{equation}
\noindent where $u \in \mathbb{R}, \ \mu \in \mathbb{R} \ \delta > 0, \ 0 \le |\beta| \le \alpha$. It is known that parameters of the NIG play the following role for the underlying distribution: $\alpha$ is responsible for the tail heaviness of steepness, $\beta$ affects the symmetry, $\delta$ scales the distribution, and $\mu$ determines the mean value (location). It is also known that when using the NIG process for option pricing the location parameter of the distribution has no effect on the option value, so one can use $\mu = 0$. This, however, is not critical for the below approach and could be easily relaxed.

The linearity of the log of the characteristic function with respect to time shows that it is an infinitely divisible process with stationary independent increments.

The NIG process has no diffusion component making it a pure jump process with the L\'evy density
\begin{equation} \label{NIGLevydens}
\nu(dx)  = \dfrac{2 \alpha \delta}{\pi} \dfrac{\exp(\beta x) K_1(\alpha |x|)}{|x|} dx,
\end{equation}
\noindent where $K_\lambda(z)$ is the modified Bessel function of the third kind\footnote{They are also known as the modified Bessel functions of the second kind, or Macdonald functions, see \cite{BesselK}.}

Next we need to substitute \eqref{NIGLevydens} into the definition of the operator $\mathcal{J}$ in \eqref{intGen} and fulfill a formal integration to obtain the corresponding evolutionary pure jump equation in the explicit form. However, as it was mentioned earlier this step could be formalized by making use of \eqref{MGT}. Therefore, we immediately obtain
\begin{equation} \label{JING}
\mathcal{J} = \delta \left( \sqrt{\alpha^2 - \beta^2} - \sqrt{\alpha^2 - (\beta+ \triangledown)^2} \right),
\end{equation}
\noindent where $\triangledown \equiv \partial /\partial x$. The corresponding evolutionary pure jump equation to be solved is
\begin{equation} \label{EqMerton}
C^{(2)}(x,\tau) = \mathcal{A} C^{(1)}(x,\tau), \qquad
\mathcal{A} = \exp \left\{ \Delta \tau \delta \left( \sqrt{\alpha^2 - \beta^2} - \sqrt{\alpha^2 - (\beta+ \triangledown)^2} \right) \right\}.
\end{equation}
Before constructing a finite difference scheme to solve this equation
we need to introduce some definitions. Define a one-sided {\it forward} discretization of $\triangledown$, which we denote as $A^F: A^F C(x) = [C(x+h,t) - C(x,t)]/h$. Also define a one-sided {\it backward} discretization of $\triangledown$, denoted as $A^B: \ A^B C(x) = [C(x,t) - C(x-h,t)]/h$. These discretizations provide first order approximation in $h$, e.g., $\triangledown C(x) = A^F C(x) +  O(h)$.
To provide the second order approximations, use the following definitions. Define $A^C_2 = A^F\dot A^B$ - the {\it central} difference approximation of the second derivative $\triangledown^2$, and $A^C = (A^F + A^B)/2$ - the {\it central} difference approximation of the first derivative $\triangledown$. Also define a one-sided second order approximations to the first derivatives: {\it backward} approximation  $A^B_2: \ A^B_2 C(x) = [ 3 C(x) - 4 C(x-h) + C(x-2h)]/(2 h)$, and {\it forward} approximation $A^F_2: \ A^F_2 C(x) = [ -3 C(x) + 4 C(x+h) - C(x-2h)]/(2 h)$. Also $I$ denotes a unit matrix.
All these definitions assume that we work on a uniform grid, however this could be easily generalized for the non-uniform grid as well, see, e.g., \cite{HoutFoulon2010}.
Below we consider two cases, $\beta < 0$ and $\beta \ge 0$.
\begin{proposition} \label{propNIG+}
If $\beta < 0$, then the discrete counterpart $J$ of the operator $\mathcal{J}$ is the negative of an M-matrix if
\[ J = \delta \left( \sqrt{\alpha^2 - \beta^2}I - \left[(\alpha^2 - \beta^2)I - 2 \beta A^B_2 -
A^C_2 \right]^{1/2} \right) \]
The matrix $J$ is an $O(h^2)$ approximation  of the operator $\mathcal{J}$.
\end{proposition}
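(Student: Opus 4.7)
The proposition has two parts: that $J$ is an $O(h^2)$ approximation of $\mathcal{J}$, and that $-J$ is an M-matrix. The approximation claim follows from a routine Taylor-expansion argument. Standard computations give $A^B_2 C(x) = \partial_x C(x) + O(h^2)$ and $A^C_2 C(x) = \partial_x^2 C(x) + O(h^2)$, so $M := (\alpha^2-\beta^2)I - 2\beta A^B_2 - A^C_2 = \alpha^2 - (\beta+\triangledown)^2 + O(h^2)$. Because the symbol $\alpha^2 - (\beta+i\xi)^2$ has real part $\alpha^2 - \beta^2 + \xi^2 \ge \alpha^2 - \beta^2 > 0$ for every real $\xi$ (using $|\beta|\le\alpha$), the principal square root is analytic on a neighborhood of the spectrum of $M$, and analytic perturbation theory for matrix functions promotes this bound to $M^{1/2} = [\alpha^2 - (\beta+\triangledown)^2]^{1/2} + O(h^2)$. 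Multiplying by $\delta$ and subtracting from $\delta\sqrt{\alpha^2-\beta^2}\,I$ then yields $J = \mathcal{J} + O(h^2)$.

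For the M-matrix claim I would use the characterization that a non-singular matrix is an M-matrix if and only if it is a Z-matrix (non-positive off-diagonal entries) and every eigenvalue has non-negative real part. The eigenvalue condition is the easier of the two: a Gershgorin argument on $M$, together with the identity $\Re\sqrt{\lambda} = \sqrt{(|\lambda| + \Re\lambda)/2} \ge \sqrt{\Re\lambda}$ valid for $\lambda$ in the open right half-plane, forces every eigenvalue of $M^{1/2}$ to have real part at least $\sqrt{\alpha^2-\beta^2}$, so the eigenvalues of $-J$ all have non-negative real part. For the Z-matrix condition I would use the difference-of-squares factorization $M^{1/2} - cI = (M - c^2 I)(M^{1/2} + cI)^{-1}$ with $c = \sqrt{\alpha^2-\beta^2}$ to rewrite
\[
-J = \delta\bigl(-2\beta A^B_2 - A^C_2\bigr)\bigl(M^{1/2} + \sqrt{\alpha^2-\beta^2}\,I\bigr)^{-1},
\]
combined with the Cauchy-type resolvent representation $(M^{1/2}+cI)^{-1} = \frac{1}{\pi}\int_0^\infty \frac{\sqrt{s}}{s+c^2}(sI+M)^{-1}\,ds$, thereby reducing the off-diagonal sign analysis of $-J$ to sign properties of the resolvent $(sI+M)^{-1}$ and of the explicitly banded matrix $-2\beta A^B_2 - A^C_2$.

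The main obstacle is that $A^B_2$ contributes a positive second-subdiagonal entry $-\beta/h > 0$ (since $\beta<0$), so $M$ is not itself a Z-matrix and the classical theorem that the principal square root of an M-matrix is an M-matrix does not apply verbatim. The crux is to show that once the Cauchy integral is evaluated and the resolvent $(sI+M)^{-1}$ is expanded in a Neumann series, the diffusive smoothing from $-A^C_2$ dominates the upwind correction from $-2\beta A^B_2$ so that every off-diagonal entry of the product that defines $-J$ ends up non-positive. I expect the bulk of the technical work to concentrate on this entrywise verification; the remaining ingredients (convergence of the integral, spectral localization of $M$, and the eigenvalue computation above) are routine.
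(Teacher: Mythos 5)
Your Taylor-expansion argument for the $O(h^2)$ claim and your spectral localization of $M$ are fine and essentially match the paper's (much terser) treatment of those points. The genuine gap is in the Z-matrix step, and you have in fact put your finger on exactly where your route breaks down: since $\beta<0$, the matrix $-2\beta A^B_2$ carries the positive entry $-\beta/h$ on its second subdiagonal, so $M=(\alpha^2-\beta^2)I-2\beta A^B_2-A^C_2$ is not a Z-matrix, and the entrywise verification you defer (``I expect the bulk of the technical work to concentrate on this'') cannot be completed: the positive second subdiagonal propagates through the resolvents $(sI+M)^{-1}$ in your Cauchy integral and survives in $-J$, so $-J$ is \emph{not} a Z-matrix for finite $N$. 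The paper's own numerics confirm this indirectly: if $-J$ were a genuine nonsingular M-matrix, $e^{\Delta\tau J}$ would be nonnegative for every grid size, yet the text reports that positivity is only achieved for $N>100$.

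The paper does not attempt to prove the M-matrix property at all (despite the proposition's wording). Instead it relaxes the target to EM-matrices built from \emph{eventually nonnegative} matrices in the sense of Noutsos--Tsatsomeros and Elhashash--Szyld: Lemma~\ref{lemma2} shows $2\beta A^B_2$ is eventually exponentially nonnegative, hence $(\alpha^2-\beta^2)I-2\beta A^B_2$ is an EM-matrix; adding the M-matrix $-A^C_2$, taking the principal square root, and applying the final diagonal shift are argued to preserve that class; the conclusion is only that $e^{\Delta\tau J}\ge 0$ and $\|e^{\Delta\tau J}\|<1$ \emph{starting from some $N$}, consistent with the EM-matrix phrasing of the companion Proposition~\ref{propNIG-}. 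To repair your proof you must either abandon the Z-matrix characterization in favour of eventual (exponential) nonnegativity, accepting the weaker ``for $N$ large enough'' conclusion, or change the discretization of the first derivative so that $M$ itself is a Z-matrix (e.g.\ first-order upwinding), which would sacrifice the $O(h^2)$ claim.
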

\begin{proof}
See Appendix.
\end{proof}
\begin{proposition} \label{propNIG-}
If $\beta \ge 0$, then the discrete counterpart $J$ of the operator $\mathcal{J}$ is the negative of an EM-matrix if
\[ J = \delta \left( \sqrt{\alpha^2 - \beta^2}I - \left[(\alpha^2 - \beta^2)I - 2 \beta A^F_2 -
A^C_2 \right]^{1/2} \right) \]
The matrix $J$ is an $O(h^2)$ approximation  of the operator $\mathcal{J}$.
\end{proposition}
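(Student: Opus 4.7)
The plan is to parallel the proof of Proposition~\ref{propNIG+} and to isolate the single place where the sign of $\beta$ degrades the conclusion from an M-matrix to an EM-matrix. Set $M := (\alpha^2-\beta^2)I - 2\beta A^F_2 - A^C_2$, so that $J = \delta\bigl(\sqrt{\alpha^2-\beta^2}\,I - M^{1/2}\bigr)$, and the claim decomposes into the $O(h^2)$ approximation statement and the structural statement that $B := -J$ is an EM-matrix.

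The approximation part is routine: $A^F_2 = \partial_x + O(h^2)$ and $A^C_2 = \partial_x^2 + O(h^2)$ by construction, so $M$ is a second-order consistent discretization of the symbol $\alpha^2 - (\beta + \partial_x)^2$. For small enough $h$ the matrix $M$ is symmetric positive definite (the limit symbol is positive and the perturbation has order $h^2$), so its principal square root is well defined and smooth in the entries of $M$, which transfers the $O(h^2)$ order to $M^{1/2}$ and hence to $J$ via the rigid scalar shift.

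The substance is the sign-pattern analysis. The term $-A^C_2$ contributes only non-positive off-diagonals at $\pm h$. The term $-2\beta A^F_2$ with $\beta\ge 0$ contributes $3\beta/h$ on the diagonal, $-4\beta/h$ on the $+h$ off-diagonal, and the strictly positive entry $\beta/h$ on the $+2h$ super-super-diagonal. This last entry is the only obstruction to $M$ being a Z-matrix and hence to $B$ being an M-matrix in the strong sense of Proposition~\ref{propNIG+}. However, the diagonal of $M$ dominates all off-diagonal row contributions at rate $1/h^2$, so for small $h$ the matrix $M$ has a simple Perron root with strictly positive right and left eigenvectors and a uniform spectral gap; $M^{1/2}$ inherits the same Perron pair (the square root is monotone on the positive spectrum of $M$), and the rigid shift by $-\sqrt{\alpha^2-\beta^2}\,I$ preserves eigenvectors. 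By the standard spectral characterization of eventually exponentially nonnegative matrices, $-B = J$ then satisfies $e^{t J}\ge 0$ for all $t$ sufficiently large, which is the EM-matrix property asserted.

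The main obstacle is exactly the step of lifting sign information across the matrix square root. In Proposition~\ref{propNIG+} the analogue of $M$ is close to a shifted Z-matrix and a Neumann expansion of $M^{1/2}$ delivers off-diagonals of a definite sign term-by-term; here the Neumann expansion is mixed-sign because of the positive $+2h$ entry, so I would abandon the term-by-term route and appeal directly to a Perron-based characterization of eventual exponential nonnegativity for $e^{-tB}$ rather than attempting to control the off-diagonals of $B$ themselves.
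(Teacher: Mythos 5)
There is a genuine gap at the crux of your argument. You reduce the EM-matrix claim to the assertion that, because the diagonal of $M$ dominates the off-diagonal row sums at rate $1/h^2$, the matrix $M$ ``has a simple Perron root with strictly positive right and left eigenvectors.'' Diagonal dominance gives Gershgorin localization and invertibility; it does not give positivity of eigenvectors (a diagonal matrix with distinct positive entries is diagonally dominant and has no positive eigenvector). The strong Perron--Frobenius property for the relevant shifted matrix and its transpose is, by the Noutsos-type characterization you invoke, essentially \emph{equivalent} to the eventual positivity you are trying to establish, so asserting it without proof begs the question. Two secondary errors feed into this: $M$ is not symmetric positive definite --- the one-sided stencil $-2\beta A^F_2$ is upper triangular with entries $3\beta/h$, $-4\beta/h$, $\beta/h$ on $d_0,d_1,d_2$, so $M$ is genuinely nonsymmetric (the principal square root still exists because the spectrum lies in the open right half-plane, but not for the reason you give); and your premise that the $\beta<0$ case of Proposition~\ref{propNIG+} admits a sign-definite Neumann expansion is false, since there $-2\beta A^B_2$ carries the same wrong-sign entry $-\beta/h>0$ on the $d_{-2}$ subdiagonal, which is precisely why the paper needs the eventual-nonnegativity machinery of Lemma~\ref{lemma2} already in that case.

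Missing that last point is what costs you the short proof. The matrix $-2\beta A^F_2$ with $\beta\ge 0$ is the exact transpose of $-2\beta' A^B_2$ with $\beta'=-\beta<0$, so the matrix $M$ here is the transpose of the matrix analyzed in Proposition~\ref{propNIG+}. Since eventual nonnegativity of $A+bI$ and of $A^T+bI$ are equivalent (item 3 of Lemma~\ref{lemma1}), and the spectrum, the square root, and the scalar shifts all commute with transposition, every step of the proof of Proposition~\ref{propNIG+} carries over verbatim. That transposition observation is the paper's entire proof of this proposition; your Perron-based route could in principle be made rigorous, but as written the key spectral input is unsupported.
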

\begin{proof}
See Appendix.
\end{proof}

Thus, according to Proposition~\ref{prop0} these finite difference schemes are unconditionally stable starting with some $N$, preserve positivity of the solution, and approximate operator $\mathcal{J}$ with $O(h^2)$. In our experiments shown below this positivity was achieved at $N  > 100$.

To complete the solution we need to compute the matrix exponential $ \mathcal{A} = e^{\Delta \tau \mathcal{J}}$ and the product  $z(x,\tau) = \mathcal{A} C(x)$. The last step, however, could be further simplified. To see this recall that the diffusion equations in \eqref{splitFin} have to be solved up to some order of approximation in time $\tau$. Suppose for this purpose we want to use a finite-difference scheme that provides a second order approximation, $O((\Delta \tau^2))$. However, \eqref{EqMerton} gives an {\it exact} solution of the corresponding pure jump equation (the second step in Strang's splitting scheme). Since Strang's scheme guarantees only second-order accuracy ($O((\Delta \tau)^2)$) to the exact solution of the full PIDE, the second step could be computed to the same order of accuracy.

To this end we can use the (1,1) P{\'a}de approximation of $e^{\Delta \tau \mathcal{J}}$,
\begin{equation} \label{mer1}
e^{\Delta \tau \mathcal{J}} \approx [1 - \frac{1}{2}\Delta \tau \mathcal{J}]^{-1}[1 + \frac{1}{2}\Delta \tau \mathcal{J}] + O(\Delta \tau^3).
\end{equation}
This could be re-written in the form of the fixed point Picard iterations scheme
\[ C^{(1)}(x, \tau + \Delta \tau) - C^{(1)}(x, \tau) = \frac{1}{2} \Delta \tau \mathcal{J}^* \left[C^{(1)}(x, \tau + \Delta \tau) + C^{(1)}(x, \tau)\right], \]
\noindent and this equation could be solved iteratively starting with the initial guess
$C^{(1)}(x, \tau + \Delta \tau) = C^{(1)}(x, \tau)$. Note that at each iteration the vector $z(x,t)$ should be computed.

\paragraph{Numerical experiments}
In our numerical experiments we consider the NIG model which also has a diffusion component uncorrelated with the jumps. We compute just one step of the splitting procedure, i.e. the jump integral, and don't consider solution of the diffusion part of the model.
We want to price a European call option and take the option model parameters similar to \cite{Halluin2005b}, i.e., $S_0 = K = 100, r = 0.05, \sigma = 0.15$. The NIG model parameters are $\delta = 0.2, \alpha = 10, \beta = - 5.7, \mu = 0$. One step in time is computed by taking $T = \Delta \tau = 0.01$. As $C^{(1)}(x, \Delta \tau)$ in the \eqref{splitFin} comes after the first step of splitting, we get it by using the Black-Scholes formula with the forward interest rate $r+c$ where the term $c = - \log \Phi_{NIG}(\alpha, \beta,\delta,\mu,-i)$ comes from the last term in the jump integral in \eqref{intGen}. At the second step the solution of the jump part $C^{(2)}_j(x,\Delta \tau)$ is produced given the initial condition $C^{(1)}(x,\Delta \tau)$ from the previous step. We compare our solution for the jump step with that obtained with $N = 2300$ which is assumed to be close to the exact value\footnote{This method is not very accurate. But as the exact solution is not known, it provides a plausible estimate of the convergence.}. The finite-difference grid was constructed as follows: the diffusion grid was taken from $x^D_{min} = 10^{-3}$ to $x^D_{max} = 30 \max(S,K)$. The jump grid is a superset of the diffusion grid, i.e. it coincides with the diffusion grid at the diffusion domain and then extends this domain up to $x^J_{max} = \log (10^5)$. Here to simplify the convergence analysis we use an uniform grid with step $h$. However, non-uniform grid can be easily constructed as well, and, moreover, that is exactly what this algorithm was constructed for.
The results of such a test are given in Table~\ref{Tab1}. Here $C$ is the price in dollars, $N$ is the number of grid nodes, $\beta_i$ is the order of convergence of the scheme. The "exact" price obtained at $N = 2300$ is $C_{num}(\Delta \tau)$ =  0.756574. It is seen that the convergence order $\beta_i
= \log_2 \frac{C(i) - C_{num}}{C(i+1)-C_{num}}, \ i=1,2...$ of the scheme is asymptotically close to $O(h^2)$.
\begin{table}[!ht]
\begin{center}
\begin{tabular}{|c|l|r|l|r|}
\hline
$C$ & $h$ & $N$ & $\beta_i$ \cr
\hline
4.0657 & 0.2763100 & 51  & - \cr
\hline
2.1239 & 0.1381550 & 101 & 1.275 \cr
\hline
1.1914 & 0.0690776 & 201 & 1.653 \cr
\hline
0.8171 & 0.0345388 & 401 & 2.845 \cr
\hline
0.7597 & 0.0172694 & 801 & 4.295 \cr
\hline
0.7565 & 0.0086347 & 1601 & 5.005 \cr
\hline
\end{tabular}
\caption{Convergence of the proposed scheme for the NIG model with $\beta = -5.7 < 0$, $T = \Delta \tau = 0.01$}
\label{Tab1}
\end{center}
\end{table}
As a sanity check we can compare this value with the reference value obtained by pricing this model (one step) using FFT, which is $C_{FFT}(\Delta \tau)$ = 0.757782. Note, that $C_{FFT}(\Delta \tau)$ should not be exactly equal to $C_{num}(\Delta \tau)$ because we use two steps instead of three as in the Strang algorithm \footnote{Don't miss this with the accuracy of the whole 3 steps Strang's algorithm which is $O(\Delta \tau^2)$. The test validates just the convergence in $h$, not in $\Delta \tau$.} which are equivalent to the splitting scheme of the first order in $\Delta \tau$, i.e. it has an error $O(\Delta \tau)$. However, still $C_{FFT}(\Delta \tau)$ is close to $C_{num}(\Delta \tau)$.
The second experiment uses the same set of parameters, but now $\beta = 5.7 > 0$. The results are given in Table~\ref{Tab2}. The "exact" price obtained at $N = 2300$ is $C_{num}(\Delta \tau)$ =  0.76864. Again the convergence order $\beta_i$  of the scheme is close to $O(h^2)$.
\begin{table}[!ht]
\begin{center}
\begin{tabular}{|c|l|r|l|r|}
\hline
$C$ & $h$ & $N$ & $\beta_i$ \cr
\hline
4.0127 & 0.2763100 & 51 & - \cr
\hline
2.1896 & 0.1381550 & 101 & 1.191 \cr
\hline
1.1627 & 0.0690776 & 201 & 1.850 \cr
\hline
0.8393 & 0.0345388 & 401 & 2.480 \cr
\hline
0.7710 & 0.0172694 & 801 & 4.892 \cr
\hline
0.7685 & 0.0086347 & 1601 & 4.557  \cr
\hline
\end{tabular}
\caption{Convergence of the proposed scheme for the NIG model with $\beta = 5.7 > 0, \ T = \Delta \tau = 0.01$}
\label{Tab2}
\end{center}
\end{table}
In this test $C_{FFT}(\Delta \tau)$ = 0.76773 which is also close to $C_{num}(\Delta \tau)$.
As a final note, performance wise the principal square root of matrix is better to compute using
the product form of the Denman-Beavers iteration, see \cite{Denman1976} for the description of the algorithm.

\subsection{Generalized hyperbolic models.}

Generalized hyperbolic process constitutes a broad subclass of L\'evy processes
which are generated by generalized hyperbolic (GH) distributions. They were introduced
in \cite{BN78}. See also \cite{EberleinKeller:95} for the detailed survey on how the hyperbolic distributions are used in finance.

The Lebesgue density of the GH distribution is a 5-parameter function
\begin{equation}
f(\lambda, \alpha, \beta, \delta, \mu) = a(\lambda, \alpha, \beta, \delta, \mu) \left[\delta^2 + (x-\mu)^2 \right]^{\frac{\lambda - 1/2}{2}} K_{\lambda - 1/2}
\left( \alpha \sqrt{\delta^2 + (x-\mu)^2} \right) e^{\beta(x-\mu)},
\end{equation}
\noindent where the normalization constant reads
\[ a(\lambda, \alpha, \beta, \delta, \mu) = \dfrac{(\alpha^2 - \beta^2)^{\lambda/2}}{\sqrt{2\pi} \alpha^{\lambda - 1/2} \delta^\lambda K_\lambda(\delta \sqrt{\alpha^2 - \beta^2})}.\]
Here $\alpha > 0$ determines the shape of the distribution, $\beta$ determines the skewness, and $0 \le |\beta| < \alpha$, $\mu \in \mathbb{R}$ determines the location (mean), $\delta > 0$ is scaling, and $\lambda \in \mathbb{R}$ determines the weight of the distribution in the tails. In particular, $\lambda = -1/2$ corresponds to the NIG distribution considered in the previous sections

The characteristic exponent of the GH process is
\begin{align} \label{GH_CE}
\phi(u,\alpha,\beta,\delta,\mu, \lambda) &= i u \mu  + \log \Psi \\
\Psi &= \left( \dfrac{\alpha^2 - \beta^2}{\alpha^2 - (\beta + i u)^2} \right)^{\lambda/2}
\dfrac{ K_\lambda(\delta \sqrt{\alpha^2 - (\beta + i u) ^2})}{ K_\lambda(\delta \sqrt{\alpha^2 - \beta^2})}. \nonumber
\end{align}
\noindent and the L\'evy density $\nu(dx)$ of the GH L\'evy motions reads
\begin{equation*}
\nu(dx) = \dfrac{\exp(\beta x)}{|x|} \left(
\int_0^\infty
\dfrac{ \exp(-\sqrt{2y + \alpha^2}|x|)}{\pi^2 y (J^2_{|\lambda|} (\delta \sqrt{2y}) + Y^2_{|\lambda|} (\delta \sqrt{2y}))}
d y + \mathds{1}_{\lambda \ge 0} \lambda e^{-\alpha |x|}
\right)dx,
\end{equation*}
\noindent where $J,Y$ are the corresponding Bessel functions.

From \eqref{MGT},\eqref{GH_CE} we immediately obtain that
\[ \mathcal{J} = \phi(-i \triangledown, \alpha,\beta,\delta,\mu, \lambda), \]
and the jump evolution equation (the second equation in \eqref{splitFin}) becomes
\begin{align} \label{ghEq}
C(x,\tau + \Delta \tau) &=  \mathcal{A} C(x,\tau), \\
\mathcal{A} & = e^{t \mu \triangledown} \left( \dfrac{\alpha^2 - \beta^2}{\alpha^2 - (\beta + \triangledown)^2} \right)^{\lambda \Delta \tau}
\left\{ \dfrac{ K_\lambda(\delta \sqrt{\alpha^2 - (\beta + \triangledown)^2})}{ K_\lambda(\delta \sqrt{\alpha^2 - \beta^2})} \right\}^{\Delta \tau}. \nonumber
\end{align}

Similar to the Meixner model (see the previous section) the first term $e^{t \mu \triangledown}$ could be taken out of this expression and moved to the diffusion part of \eqref{oper}.
The remaining operator $\mathcal{B} = e^{-t \mu \triangledown}\mathcal{A} $ could be represented in a form of a product of two operators
\begin{align} \label{ghFactor}
\mathcal{B} &= \mathcal{B}_1 \mathcal{B}_2, \\
\mathcal{B}_1 &= \left( \dfrac{\alpha^2 - \beta^2}{\alpha^2 - (\beta + \triangledown)^2} \right)^{\lambda \Delta \tau},
\qquad
\mathcal{B}_2 = \left\{ \dfrac{ K_\lambda(\delta \sqrt{\alpha^2 - (\beta + \triangledown)^2})}{ K_\lambda(\delta \sqrt{\alpha^2 - \beta^2})} \right\}^{\Delta \tau}. \nonumber
\end{align}
To construct the approximation of $\mathcal{B}_2$ take into account that by \cite{as64} for the modified Bessel functions of the third kind we have
\begin{align} \label{Kas}
K_\nu (z) &= \sqrt{\dfrac{\pi}{2 z}} e^{-z} \sum_{k=0}^\infty \dfrac{a_k(\nu)}{z^k}, \qquad |z| \rightarrow \infty, \ |\mbox{arg} \ z| < \dfrac{3 \pi}{2}, \\
a_0(\nu) &= 1, \ a_k(\nu) = \dfrac{(4 \nu^2 - 1^2)....(4 \nu^2 - (2k-1)^2)}{k! 8^k}. \nonumber
\end{align}
We want to approximate $K_\nu (z)$ up to $O(h^2)$. All operators $\mathcal{B}_i, i=1,2$ in \eqref{ghFactor} are actually the operator functions of another operator $z(\triangledown)$, where $z(x) \equiv \delta \sqrt{\alpha^2 - (\beta + x)^2}$. Obviously, any order discretization of the operator $\triangledown$ on a grid is proportional to $1/h$. Therefore, discretization of
$z$  is also proportional to $1/h$. Hence, such discretization applied to the terms $1/z^k$ will be proportional to $h^k$. When this operator affects a discrete vector function also defined at the same grid, the total error will be not be worth than $O(h^2)$. That means that in the series expansion \eqref{Kas} we can keep only terms with $k=0,1$ while omitting the remaining ones.

With allowance for that and \eqref{Kas} we redefine $\mathcal{B}_i, \ i=1,2$ as follows.
\begin{align} \label{redefB}
\mathcal{B} &= \mathcal{B}_1 \mathcal{B}_2, \qquad
\mathcal{B}_1 = \left( \dfrac{z(0)}{z(\triangledown)}\right)^{(\lambda + 1/2) \Delta \tau}, \\
\mathcal{B}_2 &= \left[e^{z(0) - z(\triangledown)}
\left( 1 + \dfrac{4 \lambda^2 - 1}{z(\triangledown)} \right) \left(\sum_{k=0}^\infty \dfrac{a_k(\nu)}{z(0)^k}\right)^{-1}\right]^{\Delta \tau}. \nonumber
\end{align}

Thus, now we need to construct an appropriate discretization of the operators $\mathcal{B}_i, \ i=1,2$. In doing that consider two cases.

\subsubsection{$\lambda \ge -1/2$}.

\begin{proposition} \label{Beta1}
Assume that $\beta < 0$. Denote by $B_1$ the following discrete representation of the operator $\mathcal{B}_1$ on a given grid $\bf{G}(x)$:
\[ B_1 = \left[ \dfrac{(\alpha^2 - \beta^2)I - 2 \beta A^B_2 - A^C_2}{\alpha^2 - \beta^2} \right]^{-\Delta \tau (\lambda +1/2)/2} \]
Then $B_1$ is a nonnegative matrix with all eigenvalues $|\lambda_i| < 1, \ \forall i \in [1,N]$. The matrix $B_1$ is an $O(h^2)$ approximation  of the operator $\mathcal{B}_1$.
\end{proposition}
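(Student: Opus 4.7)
Set $M \equiv (\alpha^2-\beta^2)I - 2\beta A^B_2 - A^C_2$ and $c \equiv \Delta\tau(\lambda+1/2)/2$, so that $B_1 = (\alpha^2-\beta^2)^c\, M^{-c}$; the hypothesis $\lambda \ge -1/2$ ensures $c \ge 0$ (and the nontrivial case is $c>0$). The central structural fact, extracted from the proof of Proposition~\ref{propNIG+}, is that $M^{1/2} - \sqrt{\alpha^2-\beta^2}\,I$ is a nonsingular M-matrix; equivalently, $M^{1/2}$ is a Z-matrix whose spectrum sits strictly inside the half-plane $\{z:\mathrm{Re}\,z > \sqrt{\alpha^2-\beta^2}\}$. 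My plan is to reduce every claim about $B_1$ to this M-matrix structure at the level of the half-power via the identity
\[ M^{-c} = \bigl(M^{1/2}\bigr)^{-2c}. \]

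\textbf{Nonnegativity and eigenvalue bound.} For entrywise nonnegativity I would invoke the Laplace representation of the fractional inverse power,
\[ A^{-s} = \frac{1}{\Gamma(s)}\int_0^\infty t^{s-1}\,e^{-tA}\,dt, \qquad s>0, \]
applied with $A = M^{1/2}$ and $s = 2c$; this is well defined because the spectrum of $M^{1/2}$ lies in the open right half-plane. Since $M^{1/2}$ is a Z-matrix, $-M^{1/2}$ is Metzler (essentially nonnegative), so the semigroup $e^{-tM^{1/2}}$ is entrywise nonnegative for every $t\ge 0$; the integrand is therefore nonnegative and $B_1 = (\alpha^2-\beta^2)^c(M^{1/2})^{-2c}$ is entrywise nonnegative. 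For the spectrum, nonsingularity of the M-matrix $M^{1/2}-\sqrt{\alpha^2-\beta^2}\,I$ forces every eigenvalue $\mu$ of $M^{1/2}$ to satisfy $\mathrm{Re}(\mu) > \sqrt{\alpha^2-\beta^2}$, whence $|\mu|^2 \ge \mathrm{Re}(\mu)^2 > \alpha^2-\beta^2$; the spectral mapping theorem applied to $x\mapsto (\alpha^2-\beta^2)^c x^{-2c}$ then yields
\[ |\lambda_i(B_1)| = \frac{(\alpha^2-\beta^2)^c}{|\mu_i|^{2c}} < 1, \qquad i\in[1,N]. \]

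\textbf{Approximation order and main obstacle.} The $O(h^2)$ claim follows from the second-order consistency of $A^B_2$ and $A^C_2$ as discretizations of $\triangledown$ and $\triangledown^2$: the matrix $M/(\alpha^2-\beta^2)$ then approximates the operator symbol $[\alpha^2-(\beta+\triangledown)^2]/(\alpha^2-\beta^2)$ to second order, and since $x\mapsto x^{-c}$ is analytic on a neighborhood of the spectrum (which is bounded away from the origin), the Riesz functional calculus transfers the order to $B_1$. The step I expect to be the genuine obstacle is the entrywise nonnegativity of a non-integer power of $M$: fractional powers of a nonnegative matrix need not themselves be nonnegative, and $M$ itself is not a Z-matrix (it carries a positive entry on the second subdiagonal coming from the $1/(2h)$ weight in $A^B_2$), so neither the Laplace representation nor the Metzler semigroup argument can be applied to $M$ directly. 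The whole argument therefore hinges on the half-power factorization $M^{-c}=(M^{1/2})^{-2c}$ together with the Z-matrix structure of $M^{1/2}$ that was established in Proposition~\ref{propNIG+}.
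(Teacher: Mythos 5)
Your plan hinges on a ``central structural fact'' that is stronger than anything Proposition~\ref{propNIG+} actually establishes, and this is where the argument breaks. The paper's proof of Proposition~\ref{propNIG+} shows only that $M_2=(\alpha^2-\beta^2)I-2\beta A^B_2-A^C_2$ and its square root are \emph{EM-matrices}, i.e.\ of the form $sI-B$ with $B$ merely \emph{eventually} nonnegative; it does not show that $M^{1/2}$ is a Z-matrix or that $M^{1/2}-\sqrt{\alpha^2-\beta^2}\,I$ is a nonsingular M-matrix. The distinction is fatal to your Laplace-representation step: an EM-matrix is only \emph{eventually} exponentially nonnegative, so $e^{-tM^{1/2}}\ge 0$ is guaranteed only for $t\ge t_0$ with some $t_0>0$, while the integral $\frac{1}{\Gamma(s)}\int_0^\infty t^{s-1}e^{-tM^{1/2}}\,dt$ runs through small $t$ where the semigroup can (and generically does) have negative entries. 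If your argument were valid it would give entrywise nonnegativity of $B_1$ for \emph{every} grid size $N$, which contradicts the paper's own observation that positivity of the schemes built from these discretizations is achieved only ``starting with some $N$'' (empirically $N>100$). The same unproved premise also underlies your spectral bound $\mathrm{Re}(\mu)>\sqrt{\alpha^2-\beta^2}$, although that conclusion happens to be recoverable by other means.

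For contrast, the paper's own proof never touches fractional powers of $M^{1/2}$ at all: nonnegativity of the inverse is obtained by writing $M$ as a sum of EM-matrices (Lemma~\ref{lemma2}) and invoking the Le--McDonald theorem on inverses of $sI-P$ with $P$ irreducible eventually nonnegative (Lemma~\ref{lemma21}); and the eigenvalue bound $|\lambda_i(B_1)|<1$ is obtained by an explicit asymptotic computation of the spectrum of $M$ using the approximate commutativity of Toeplitz matrices, giving $\lambda_i(M)>\alpha^2-\beta^2$ in \eqref{eigM} --- with the honest caveat that the conclusion is only asymptotic in $N$ on a uniform grid. If you want to keep your functional-calculus framework, you would need either to prove directly that the principal square root of this particular $M$ is a Z-matrix (the positive entry $|\beta|/h$ on the second subdiagonal of $M$ makes this nontrivial and probably false for small $N$), or to replace the Laplace representation by an argument that tolerates eventual rather than genuine exponential nonnegativity, accepting the same ``for $N$ large enough'' qualification the paper makes.
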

\begin{proof}
See Appendix.
\end{proof}
Note that for a non-uniform grid the prove could be constructed in a similar way, but requires many technical details which we do not consider in this paper.

\begin{proposition} \label{Beta1P}
Assume that $\beta \ge 0$. Denote by $B_1$ the following discrete representation of the operator $\mathcal{B}_1$ on a given grid $\bf{G}(x)$:
\[ B_1 = \left[\dfrac{(\alpha^2 - \beta^2)I - 2 \beta A^F_2 - A^C_2}{\alpha^2 - \beta^2} \right]^{-\Delta \tau (\lambda +1/2)/2} \]
Then $B_1$ is a nonnegative matrix with all eigenvalues $|\lambda_i| < 1, \ \forall i \in [1,N]$. The matrix $B_1$ is an $O(h^2)$ approximation  of the operator $\mathcal{B}_1$.
\end{proposition}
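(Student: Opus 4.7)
The plan is to mirror the proof of Proposition~\ref{Beta1} verbatim, with the one-sided stencil switched from $A^B_2$ to $A^F_2$ in order to compensate for the sign reversal in $\beta$. Concretely, I would set $M := (\alpha^2-\beta^2)I - 2\beta A^F_2 - A^C_2$ and proceed in two stages: (i) establish that $M/(\alpha^2-\beta^2)$ has a nonnegative fractional inverse whose spectrum lies in the closed unit disc; (ii) verify the $O(h^2)$ consistency of $B_1$ with $\mathcal{B}_1$.

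For stage (i), I would write out the stencil row of $M$ explicitly. Since now $\beta\ge 0$, the combination $-2\beta A^F_2$ contributes a positive diagonal $3\beta/h$, a dominant negative entry $-4\beta/h$ on the first superdiagonal, and a small positive entry $\beta/h$ two steps to the right, which is the mirror image of the $A^B_2$ pattern exploited in Proposition~\ref{Beta1}. Adding $(\alpha^2-\beta^2)I$, which is strictly positive by the standing assumption $0\le|\beta|<\alpha$, and $-A^C_2$ (positive diagonal $2/h^2$, negative sub- and superdiagonal $-1/h^2$), one checks that $M$ is strictly diagonally dominant with positive diagonal. A Gerschgorin argument then places the eigenvalues of $M/(\alpha^2-\beta^2)$ in the half-plane $\mathrm{Re}\,\mu\ge 1$, so the eigenvalues of $B_1=[M/(\alpha^2-\beta^2)]^{-\Delta\tau(\lambda+1/2)/2}$ lie in the closed unit disc because $\lambda+1/2\ge 0$ in this case. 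Nonnegativity of $B_1$ is then obtained by expressing the fractional power through its integral representation $X^{-s}=\tfrac{\sin\pi s}{\pi}\int_0^\infty t^{-s}(tI+X)^{-1}\,dt$ and using that each resolvent $(tI+M/(\alpha^2-\beta^2))^{-1}$ is, for $t\ge 0$, the inverse of a strictly diagonally dominant matrix of essentially Z-type.

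Stage (ii) is the routine consistency check: $A^F_2$ and $A^C_2$ are second-order accurate discretizations of $\triangledown$ and $\triangledown^2$ respectively, so the symbol of $M$ matches $\alpha^2-(\beta+\triangledown)^2$ with residual $O(h^2)$; the map $x\mapsto x^{-\Delta\tau(\lambda+1/2)/2}$ is smooth on any spectrum uniformly bounded away from $0$, so the approximation order is preserved under the fractional power.

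The main obstacle, exactly as in Proposition~\ref{Beta1}, is the stray \emph{positive} off-diagonal entry at position $(i,i+2)$ coming from the one-sided second-order stencil $A^F_2$; it prevents $M$ from being a textbook Z-matrix, so nonnegativity of the fractional inverse does not follow from the standard M-matrix calculus. I would handle this in the same way the companion proposition and Proposition~\ref{propNIG-} do: show that for sufficiently refined grids (large $N$, small $h$) the offending entry of magnitude $\beta/h$ is dominated by the neighboring negative entry of magnitude $4\beta/h+1/h^2$, so the resolvent of $M/(\alpha^2-\beta^2)$ becomes eventually nonnegative and the integral representation above delivers a nonnegative $B_1$. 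This eventual-nonnegativity caveat, transparent to the $O(h^2)$ claim, is the only substantive place where the proof differs from a direct appeal to classical M-matrix theory.
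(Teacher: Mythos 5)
Your overall strategy---transplant the proof of Proposition~\ref{Beta1} with $A^B_2$ replaced by $A^F_2$---is exactly what the paper does; its entire proof is the remark that $-2\beta A^F_2$ for $\beta\ge 0$ is the transpose of $-2\beta A^B_2$ for $\beta<0$, so every spectral and eventual-nonnegativity property carries over by transposition. However, your stage~(i) contains a genuine error: the matrix $M=(\alpha^2-\beta^2)I-2\beta A^F_2-A^C_2$ is \emph{not} strictly diagonally dominant for small $h$ when $\beta>0$. Its diagonal entry is $(\alpha^2-\beta^2)+3\beta/h+2/h^2$, while the absolute off-diagonal row sum is $2/h^2+5\beta/h$ (namely $1/h^2$ on the subdiagonal, $4\beta/h+1/h^2$ on the first superdiagonal, and $\beta/h$ on the second superdiagonal), so dominance requires $\alpha^2-\beta^2\ge 2\beta/h$, which fails as $h\to 0$; the Gerschgorin localization $\mathrm{Re}\,\mu\ge 1$ for $M/(\alpha^2-\beta^2)$ would need the even stronger inequality $-2\beta/h\ge 0$. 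The paper avoids this entirely: it uses the fact that $-2\beta A^F_2$ is triangular (so its eigenvalues are its diagonal entries $3\beta/h$), invokes asymptotic commutativity of Toeplitz matrices, and reads the eigenvalues off the explicit formula \eqref{eigM}, which does give $\lambda_i>\alpha^2-\beta^2$. You should replace the Gerschgorin step with that argument, or simply with the transpose observation, since transposition preserves the spectrum.

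On nonnegativity your route (integral representation of the fractional power plus resolvents) also differs from the paper's, which works through EM-matrices and the Le--McDonald theorem on inverses of eventually nonnegative matrices (Lemmas~\ref{lemma2} and~\ref{lemma21}). Your closing paragraph correctly identifies the real obstruction---the positive $(i,i+2)$ entry $\beta/h$ that destroys the Z-sign pattern---and proposes to resolve it by eventual nonnegativity, which is in the spirit of the paper; but as written the claim that each resolvent $(tI+M/(\alpha^2-\beta^2))^{-1}$ is nonnegative is unsupported precisely because $M$ is not a Z-matrix, so this step needs the EM-matrix machinery (or the transpose reduction) rather than classical M-matrix calculus. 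The $O(h^2)$ consistency part of your argument is fine and matches the paper.
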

\begin{proof}
The proof is analogous to the previous Proposition with allowance for the fact that matrix $-2 \beta A_2^B$ with $\beta < 0$ is the transpose of the matrix $-2 \beta A_2^F$ with $\beta \ge 0$. $\blacksquare$
\end{proof}

Now observe that the first operator $e^{z(0) - z(\triangledown)}$ in the definition of $\mathcal{B}_2^{\Delta \tau}$ is exactly the operator $\mathcal{A}$ in \eqref{EqMerton}. Therefore, Propositions~\ref{propNIG+} and \ref{propNIG-} could be used to construct the corresponding discretizations. The second part (a product of two terms in parentheses) could be represented as
\begin{equation} \label{twoPar}
C = \gamma \left( 1 + \dfrac{4 \lambda^2 - 1}{z(\triangledown)} \right) =
\gamma \left[ 1 + (4 \lambda^2 - 1) \left(\dfrac{1}{z^2(\triangledown)}\right)^{1/2} \right], \qquad \gamma = \left(\sum_{k=0}^\infty \dfrac{a_k(\nu)}{z(0)^k}\right)^{-1}.
\end{equation}
Operator $1/z^2(\triangledown)$ could be discretized using Propositions~\ref{Beta1} and \ref{Beta1P}. Coefficient $\gamma$ guarantees that all eigenvalues of the discrete discretization matrix are less than one. Thus, the proposed scheme is unconditionally stable.

Obviously as each operator $\mathcal{B}_i, \ i=1,2$ has the necessary property, a "product" of three operators (the consecutive application of them one after another) will result into the combined operator having same properties. This finalizes our construction.

Also notice, that at $\lambda = -1/2$, which corresponds to the NIG model considered in the previous sections, our scheme exactly translates to the scheme proposed there.

\subsubsection{$\lambda < -1/2$}.
At the first glance in this case the previous scheme will not converge as the eigenvalues of the  operator $\mathcal{B}_1$ are greater than one. Nevertheless, as shown below it still can be used. To demonstrate that for the sake of convenience re-write $\mathcal{B}$ in \eqref{redefB} in the form
\begin{align} \label{redefB2}
\mathcal{B} &= \left\{ e^\mathcal{M} \left( 1 + \dfrac{4 \lambda^2 - 1}{z(\triangledown)} \right) \left(\sum_{k=0}^\infty \dfrac{a_k(\nu)}{z(0)^k}\right)^{-1}\right\}^{\Delta \tau}, \nonumber \\
\mathcal{M} &=  z(0) - z(\triangledown) - (\lambda+1/2)\log \dfrac{z(\triangledown)}{z(0)}.
\end{align}
\noindent where now $-(\lambda + 1/2) > 0$. So the difference with the case $\lambda \ge -1/2$ is only in that we moved the operator $\mathcal{B}_1$ into the exponent term. Because of that for the terms in parentheses we leave the same discretization as in Propositions ~\ref{Beta1}, \ref{Beta1P}. Then the following Proposition is in order.
\begin{proposition} \label{NegLambdaBeta1}
Assume that $\beta < 0$. Denote by $Z$ the following discrete representation of the operator $z(\triangledown)$ on a given grid $\bf{G}(x)$:
\[ Z = \delta[(\alpha^2 - \beta^2)I - 2 \beta A^B_2 - A^C_2]^{1/2} \]
Then
\begin{align*}
B &= \left(\sum_{k=0}^\infty \dfrac{a_k(\nu)}{z(0)^k}\right)^{-\Delta \tau}
\left\{ e^M \left[ 1 + (4 \lambda^2 - 1)Z^{-1} \right] \right\}^{\Delta \tau}, \nonumber \\
M &=  z(0) - Z - (\lambda+1/2)\log \dfrac{Z}{z(0)}.
\end{align*}
\noindent is a nonnegative matrix with all eigenvalues $|\lambda_i| < 1, \ \forall i \in [1,N]$. The matrix $B$ is an $O(h^2)$ approximation  of the operator $\mathcal{B}$ in \eqref{redefB2}.
\end{proposition}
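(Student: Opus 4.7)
The plan is to mimic the architecture of Propositions~\ref{Beta1} and \ref{Beta1P}, now adapted to the regime $\lambda<-1/2$ via the reformulation \eqref{redefB2}, in which $\mathcal{B}_1$ has been absorbed into $e^{\mathcal{M}}$ with a positive coefficient $-(\lambda+1/2)$ multiplying $\log(Z/z(0))$. I would carry this out in four stages.

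First, a functional-calculus setup on $Z$. Proposition~\ref{propNIG+} already gives that $J/\delta=z(0)I-Z$ is the negative of an M-matrix, and since a positive diagonal shift of an M-matrix is again an M-matrix, $Z=z(0)I+(Z-z(0)I)$ is itself a nonsingular M-matrix. Consequently, every eigenvalue $\zeta_i$ of $Z$ satisfies $\mathrm{Re}(\zeta_i)\ge z(0)>0$, the inverse $Z^{-1}$ is entrywise nonnegative, the principal logarithm $\log(Z/z(0))$ is well defined, and all matrix functions entering the definition of $B$ commute (being functions of the single matrix $Z$).

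Second, nonnegativity of $B$. Using commutativity I would factor
\[
e^{M} \;=\; e^{z(0)I-Z}\,(Z/z(0))^{p},\qquad p := -(\lambda+1/2)>0,
\]
where $e^{z(0)I-Z}$ is nonnegative by Propositions~\ref{propNIG+} and \ref{prop0}. The positive fractional power $(Z/z(0))^{p}$ is then handled through a Balakrishnan-type integral representation such as $(Z/z(0))^{p}=(\sin\pi p/\pi)\int_0^\infty t^{p-1}(Z/z(0))(tI+Z/z(0))^{-1}dt$ for $0<p<1$ (extended by peeling off integer factors for $p\ge 1$), and one shows that the combined product $e^{z(0)I-Z}(Z/z(0))^{p}$ can be rewritten as a positive mixture of the M-matrix semigroup $e^{-sZ}$, each term of which is nonnegative by Proposition~\ref{prop0}. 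Combined with the nonnegativity of $I+(4\lambda^2-1)Z^{-1}$ (which holds since $4\lambda^2-1>0$ for $\lambda<-1/2$ and $Z^{-1}\ge 0$) and the positivity of the scalar prefactor, raising the resulting commuting nonnegative matrix to the power $\Delta\tau$ yields $B\ge 0$.

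Third, the spectral bound together with the order of approximation. Since every operator is simultaneously diagonalizable with $Z$, it suffices to test on each eigenvalue $\zeta$ of $Z$. By construction, the eigenvalue of the braced expression at $\zeta$ is precisely the two-term Hankel approximation \eqref{Kas} to $K_\lambda(\zeta)/K_\lambda(z(0))$, and the scalar prefactor $(\sum_k a_k(\nu)/z(0)^k)^{-\Delta\tau}$ is chosen so that this eigenvalue equals $1$ at the base mode $\zeta=z(0)$. For $\mathrm{Re}(\zeta)>z(0)$ the exponential damping $|e^{z(0)-\zeta}|$ dominates the sublinear growth $|\zeta/z(0)|^{p}$ via the concavity comparison $p\log(|\zeta|/z(0))\le p(|\zeta|/z(0)-1)\le z(0)(|\zeta|/z(0)-1)$, valid under the natural condition $p\le z(0)$, i.e.\ $|\lambda+1/2|\le \delta\sqrt{\alpha^2-\beta^2}$, so $|\mu_i(B)|<1$. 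For the $O(h^2)$ claim, truncating the Hankel asymptotic \eqref{Kas} at $k=1$ incurs $O(1/z(\triangledown)^2)$ relative error, which becomes $O(h^2)$ on the grid because the high-wavenumber discrete eigenvalues of $Z$ scale as $1/h$; combined with the second-order approximations of $\partial_x$ and $\partial_x^2$ already established in Proposition~\ref{propNIG+}, this yields an $O(h^2)$ discretization of $\mathcal{B}$ in \eqref{redefB2}.

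The main obstacle will be the nonnegativity argument for the positive non-integer matrix power $(Z/z(0))^{p}$: unlike $Z^{-1}$ or $e^{-sZ}$, positive fractional powers of M-matrices are not automatically nonnegative. One therefore cannot treat $(Z/z(0))^{p}$ in isolation and must exhibit the combined factor $e^{z(0)I-Z}(Z/z(0))^{p}$ as a single positive mixture of nonnegative $e^{-sZ}$-terms with a nonnegative weight; the nonnegativity of this weight, and hence of $B$, ultimately relies on the same parameter condition $|\lambda+1/2|\le \delta\sqrt{\alpha^2-\beta^2}$ identified in the spectral step.
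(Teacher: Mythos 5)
There is a genuine gap at the central step of your argument: the entrywise nonnegativity of $e^{M}$. You correctly observe that the positive fractional power $(Z/z(0))^{p}$, $p=-(\lambda+1/2)>0$, of an M-matrix is not entrywise nonnegative (it is again M-matrix-like, with nonpositive off-diagonals), and you propose to rescue the product $e^{z(0)I-Z}(Z/z(0))^{p}$ by exhibiting it as a positive mixture of the semigroup $e^{-sZ}$. But such a representation cannot exist: it would require the scalar function $\zeta\mapsto \zeta^{p}e^{-\zeta}$ (or any finite shift of it) to be completely monotone on $(0,\infty)$, by Bernstein's theorem, and it is not --- its $k$-th derivative is $e^{-\zeta}\zeta^{p-k}q_k(\zeta)$ with $q_k$ a degree-$k$ polynomial whose largest root grows without bound in $k$, so no shift makes all derivatives alternate in sign. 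The condition $p\le z(0)$ you extract only controls the first derivative (and the eigenvalue moduli), not the sign structure of the matrix. So the one step you yourself flag as ``the main obstacle'' is not merely unproven; the specific route you propose to close it is unavailable, and since nonnegativity of $B$ is the heart of the Proposition, the proof is incomplete.

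The paper avoids this entirely by never extracting the fractional power: it keeps $M=z(0)I-Z-(\lambda+1/2)\log(Z/z(0))$ inside a single matrix exponential and proves a dedicated lemma that the (principal) logarithm of an (E)M-matrix $A=sI-B$ is again an (E)M-matrix provided $s-\rho(B)>1$, via the series $\log A=\log s-\sum_{k\ge 1}s^{-k}B^{k}/k$; the hypothesis is secured because $\rho(Z)\gtrsim \delta\pi/h>1$ for $h<\delta\pi$. Then $M$ is (eventually) of Metzler type, so $e^{M}$ is eventually nonnegative, and the sign of the eigenvalues of $M$ is checked by a Toeplitz-asymptotic eigenvalue computation combined with a Lambert-$W$ inequality, which yields a sufficient condition on the grid step $h$ (independent of $\Delta\tau$) rather than your parameter restriction $|\lambda+1/2|\le\delta\sqrt{\alpha^{2}-\beta^{2}}$. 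Your scalar bound $p\log(|\zeta|/z(0))\le |\zeta|-z(0)$ is a clean alternative for the spectral-radius estimate, and your $O(h^{2})$ argument matches the paper's, but to repair the proof you should replace the Balakrishnan/mixture step by an argument about the sign pattern of $\log(Z/z(0))$ itself, as the paper does.
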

\begin{proof}
See Appendix.
\end{proof}

\begin{proposition} \label{NegLambdaBeta1P}
Assume that $\beta \ge 0$. Denote by $Z$ the following discrete representation of the operator $z(\triangledown)$ on a given grid $\bf{G}(x)$:
\[ Z = \delta[(\alpha^2 - \beta^2)I - 2 \beta A^F_2 - A^C_2]^{1/2} \]
Then
\begin{align*}
B &= \left(\sum_{k=0}^\infty \dfrac{a_k(\nu)}{z(0)^k}\right)^{-\Delta \tau}
\left\{ e^M \left[ I + (4 \lambda^2 - 1)Z^{-1} \right] \right\}^{\Delta \tau}, \nonumber \\
M &=  z(0) - Z - (\lambda+1/2)\log \dfrac{Z}{z(0)}.
\end{align*}
\noindent is a nonnegative matrix with all eigenvalues $|\lambda_i| < 1, \ \forall i \in [1,N]$. The matrix $B$ is an $O(h^2)$ approximation  of the operator $\mathcal{B}$ in \eqref{redefB2}.
\end{proposition}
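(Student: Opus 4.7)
The plan is to follow the proof of Proposition~\ref{NegLambdaBeta1} line by line and import all its conclusions via a transposition argument. The key observation, already exploited in going from Proposition~\ref{Beta1} to Proposition~\ref{Beta1P}, is that the matrix $-2\beta A^F_2$ with $\beta \ge 0$ is the transpose of $-2|\beta| A^B_2$ (i.e.\ the $\beta<0$ version with sign-flipped coefficient), while $A^C_2$ is symmetric. Hence the symmetric part of the bracket $(\alpha^2-\beta^2)I - 2\beta A^F_2 - A^C_2$ inside the square root defining $Z$ is the transpose of its counterpart in Proposition~\ref{NegLambdaBeta1}. Since transposition preserves the spectrum, the M-matrix property, and nonnegativity of inverses, every structural fact used in that proof carries over verbatim.

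The first step is to record that $A^F_2$ and $A^C_2$ are $O(h^2)$ discretizations of $\triangledown$ and $\triangledown^2$ respectively, so $Z$ is an $O(h^2)$ approximation of the operator $\delta\sqrt{\alpha^2-(\beta+\triangledown)^2}$, exactly as in Propositions~\ref{propNIG+}--\ref{propNIG-}. The transposed inner matrix is still the negative of an M-matrix, so its principal square root $Z$ has spectrum in the open right half plane bounded away from zero, $Z^{-1}$ exists and is entrywise nonnegative, and $\log(Z/z(0))$ is well-defined through the principal branch. The factor $I+(4\lambda^2-1)Z^{-1}$ is then nonnegative because $Z^{-1}\ge 0$ entrywise and $4\lambda^2-1>0$ for $\lambda<-1/2$.

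The main obstacle, mirroring the situation in Proposition~\ref{NegLambdaBeta1}, is to argue that $e^M$ is entrywise nonnegative and that the full product $B$ has spectral radius less than one. For nonnegativity I would show that $M=z(0)I - Z - (\lambda+1/2)\log(Z/z(0))$ is Metzler (or eventually nonnegative in the sense used in the previous proposition): $-Z$ is Metzler since $Z$ comes from an M-matrix, and $-(\lambda+1/2)\log(Z/z(0))$ inherits the same off-diagonal sign pattern because the coefficient $-(\lambda+1/2)$ is positive and the principal logarithm of an M-matrix retains the Metzler structure on which the argument of Proposition~\ref{NegLambdaBeta1} relies. Eventual exponential nonnegativity of $M$ then yields $e^M\ge 0$ entrywise.

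Finally, the eigenvalues of $B$ are the scalar images of the eigenvalues $z_i$ of $Z$ under the map $(\sum_k a_k(\nu)/z(0)^k)^{-\Delta\tau}\bigl(e^{z(0)-z_i-(\lambda+1/2)\log(z_i/z(0))}[1+(4\lambda^2-1)/z_i]\bigr)^{\Delta\tau}$, which is precisely the formal image of the characteristic exponent \eqref{GH_CE} evaluated on the spectrum of $-i\triangledown$. Since that exponent has nonpositive real part at every real frequency (it is the exponent of a probability distribution), the same bound holds on the discrete spectrum up to $O(h^2)$, giving $|\lambda_i(B)|<1$ for the grid sizes at which the spectrum of $Z$ approximates its continuous counterpart well enough. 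The $O(h^2)$ approximation bound for $B$ itself then follows by composing the $O(h^2)$ expansions of $A^F_2$ and $A^C_2$ with the analytic functions (square root, logarithm, exponential, rational) applied to them, exactly as in the proof of Proposition~\ref{NegLambdaBeta1}. $\blacksquare$
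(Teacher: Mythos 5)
Your core argument --- that the $\beta \ge 0$ case follows from Proposition~\ref{NegLambdaBeta1} by transposition, because $-2\beta A^F_2$ is the transpose of the corresponding backward-difference matrix, $A^C_2$ is symmetric, and transposition preserves the spectrum, the (E)M-matrix property and entrywise nonnegativity --- is exactly the paper's proof, which consists of precisely this one observation and nothing more.

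However, in filling in the content of Proposition~\ref{NegLambdaBeta1} you substitute one step that does not hold as stated. You derive $|\lambda_i(B)|<1$ from the fact that the GH characteristic exponent has nonpositive real part at every real frequency. That only yields $|e^{\Delta\tau\,\phi(u)}|\le 1$, with equality at $u=0$ where $\phi$ vanishes, so the strict inequality asserted in the proposition cannot come from this argument; and ``up to $O(h^2)$'' cuts the wrong way, since an $O(h^2)$ perturbation of a quantity equal to $1$ may land on either side of $1$. The paper instead computes the spectrum explicitly via Toeplitz asymptotics, obtaining $\lambda_i(Z)=\delta\left[\alpha^2-\beta^2+3|\beta|/h+(4/h^2)\sin^2\tfrac{i\pi}{2(N+1)}\right]^{1/2}>\delta\sqrt{\alpha^2-\beta^2}>0$ strictly, and then shows $\lambda_i(D[M])<0$ only under explicit restrictions on the space step: $h<\delta\pi$ so that Lemma~\ref{lemmaLog} applies to $\log Z$, and a further upper bound on $h$ expressed through the Lambert $W$-function. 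Your write-up omits these $h$-restrictions entirely, yet they are what make the strict spectral bound true; without them the claim is only asymptotic (and the paper itself is careful to say the scheme is stable ``starting from some $h$''). The remainder of your elaboration --- nonnegativity of $Z^{-1}$ and of $I+(4\lambda^2-1)Z^{-1}$, and the Metzler/eventually-nonnegative route to $e^{M}\ge 0$ --- tracks the paper's argument for the $\beta<0$ case correctly.
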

\begin{proof}
The proof is analogous to the previous Proposition with allowance for the fact that matrix $-2 \beta A_2^B$ with $\beta < 0$ is the transpose of the matrix $-2 \beta A_2^F$ with $\beta \ge 0$. $\blacksquare$
\end{proof}

\paragraph{Numerical experiments}
The numerical experiments are provided similar to how we did it with the models considered in the previous sections. The GH model parameters are $\alpha = 10, \beta = -5.7, \delta = 0.2, \mu = 0$. The results for $\lambda = -1$ are given in Table~\ref{Tab4}. The "exact" price obtained at $N = 2100$ is $C_{num}(\Delta \tau)$ =  0.73580. The observed convergence order $\beta_i$ of the scheme is close to $O(h^2)$. Also $C_{FFT}(\Delta \tau)$ = 0.73746 which is also close to $C_{num}(\Delta \tau)$.
\begin{table}[!ht]
\begin{center}
\begin{tabular}{|c|l|r|l|r|}
\hline
$C$ & $h$ & $N$ & $\beta_i$ \cr
\hline
4.0492 & 0.2763100 & 51  & - \cr
\hline
2.1145 & 0.1381550 & 101 & 1.265 \cr
\hline
1.1701 & 0.0690776 & 201 & 1.666 \cr
\hline
0.7979 & 0.0345388 & 401 & 2.805 \cr
\hline
0.7389 & 0.0172694 & 801 & 4.325 \cr
\hline
0.7357 & 0.0086347 & 1601& 5.644\cr
\hline
\end{tabular}
\caption{Convergence of the proposed scheme for the GH model, $\lambda = -1$..}
\label{Tab4}
\end{center}
\end{table}

The second test which results are given in Table~\ref{Tab5} uses $\lambda = 1$. The "exact" price obtained at $N = 2100$ is $C_{num}(\Delta \tau)$ =  0.84440. The observed convergence order $\beta_i$ of the scheme is also close to $O(h^2)$. The FFT price $C_{FFT}(\Delta \tau)$ = 0.846985 is at the distance of 0.3\% from $C_{num}(\Delta \tau)$ while theoretically the error should be proportional to $O(\Delta \tau)$ = 1\%. Thus, it seems very reasonable that this error is due to the fact that we actually used the first order approximation in time in this test (see a detailed explanation for doing that in Section~\ref{NIGsect}.

\begin{table}[!ht]
\begin{center}
\begin{tabular}{|c|l|r|l|r|}
\hline
$C$ & $h$ & $N$ & $\beta_i$ \cr
\hline
4.1444 & 0.2763100 & 51  & - \cr
\hline
2.1657 & 0.1381550 & 101 & 1.320 \cr
\hline
1.2827 & 0.0690776 & 201 & 1.592 \cr
\hline
0.8978 & 0.0345388 & 401 & 3.038 \cr
\hline
0.8475 & 0.0172694 & 801 & 4.107 \cr
\hline
0.8443 & 0.0086347 & 1601& 5.490 \cr
\hline
\end{tabular}
\caption{Convergence of the proposed scheme for the GH model, $\lambda = 1$.}
\label{Tab5}
\end{center}
\end{table}

\subsection{Meixner model}

Meixner jump model was introduced by \cite{Schoutens98, Schoutens01}. It is built based on the Meixner distribution which belongs to the class of the infinitely divisible distributions. Therefore, it gives rise to a L\'evy process - the Meixner process. The Meixner process is flexible and analytically tractable, i.e. its pdf and CF are known in closed form.

The density of the Meixner distribution $f(a,b,d,m)$ reads
\begin{equation*}
f(x;a,b,d,m) = \dfrac{[2 \cos(b/2)]^{2 d} }{2 a \pi \Gamma(2d)} \exp\left[ \dfrac{b(x-m)}{a}\right] \left| \Gamma\left( d + \dfrac{i(x-m)}{a}\right) \right|^2,
\end{equation*}
\noindent where $a > 0, \ -\pi < b < \pi - a, \ d > 0, \ m \in \mathbb{R}$.

The characteristic exponent of the Meixner process is
\begin{equation} \label{MCE}
\phi(u, a,b,d,m) = 2d \left\{ \log [\cos(b/2)] - \log \left[ \cosh\left( \dfrac{a u - i b}{2} \right) \right] \right\} + i m u,
\end{equation}
\noindent and the L\'evy density $\nu(dx)$ of the Meixner process reads
\begin{equation*}
\nu(dx) = d \dfrac{\exp(bx/a)}{x \sinh(\pi x/a)}dx.
\end{equation*}

From \eqref{MGT},\eqref{MCE} we immediately obtain
\begin{equation} \label{MJ}
\mathcal{J} = \phi(-i \triangledown, a,b,d,m) =  2d \left\{ \log [\cos(b/2)] - \log \left[ \cos\left( \dfrac{a \triangledown + b}{2} \right) \right] \right\} + m \triangledown.
\end{equation}

Now observe that the last term $m \triangledown$ could be taken out and moved to the diffusion part of \eqref{oper}. This is because when constructing our splitting algorithm we have a freedom to decide which terms to keep under the jump part and which ones should be moved to the diffusion part. As the term $m \triangledown$ is proportional to $\triangledown$, i.e. it looks similar to the drift term of the diffusion part, we can naturally add it to the drift and eliminate it from the jump integral assuming that the remaining expression of $\mathcal{J}$ is well-defined.

Using the remaining part of \eqref{MCE} the operator $\mathcal{A} = e^{\Delta \tau \mathcal{J}}$ can be represented in the form
\begin{equation} \label{M_A}
\mathcal{A} = [\cos(b/2)]^{2 d \Delta \tau} \left[\sec\left( \dfrac{a \triangledown + b}{2} \right)\right]^{2 d \Delta \tau}
\end{equation}
Thus, our goal is to compute the product $\mathcal{A} C(x,\tau)$. 

To do that let us use a representation of $\cos(x)$ in a form of the infinite product, see \cite{as64}
\begin{equation*}
\cos(x) = \prod_{n=1}^\infty \left[1 - \dfrac{x^2}{\pi^2 (n- 1/2)^2}\right]
\end{equation*}
Then \eqref{M_A} could be re-written as follows
\begin{equation} \label{M_A1}
\mathcal{A} = [\cos(b/2)]^{\kappa} \prod_{n=1}^\infty \mathcal{A}_n, \qquad 
\mathcal{A}_n \equiv \left[1 - \dfrac{(a \triangledown + b)^2}{4 \pi^2 (n- 1/2)^2}\right]^{-\kappa}, \quad \kappa = 2 d \Delta \tau.
\end{equation}

The following Proposition now gives the solution of our problem.
\begin{proposition} \label{propMeixner}
Assume that $b < 0$. Denote by $M$ the following discrete representation of the operator 
\[ z(\triangledown) = 
1 - \dfrac{(a \triangledown + b)^2}{4 \pi^2 (n- 1/2)^2} \]
on a given grid $\bf{G}(x)$:
\[ M_n = I - \dfrac{1}{4 \pi^2 (n- 1/2)^2} \left(a^2 A^C_2 + 2 a b A^B_2 + b^2 I\right) \]
Then
\begin{align*}
B &=  [\cos(b/2)]^{\kappa} \prod_{n=1}^\infty M_n^{-\kappa}
\end{align*}
\noindent is a nonnegative matrix with all eigenvalues $|\lambda_i| < 1, \ \forall i \in [1,N]$. The matrix $B$ is an $O(h^2)$ approximation  of the operator $\mathcal{A}$ in \eqref{M_A}.
\end{proposition}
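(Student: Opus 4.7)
The plan is to establish three facts in sequence: first the $O(h^2)$ consistency of each $M_n$, then the spectral bound $|\lambda_i(B)|<1$, and finally the entrywise nonnegativity, which is where the real work lies. The overall strategy mirrors Propositions~\ref{propNIG+} and~\ref{Beta1}: combine M-matrix-type positivity of each inverse factor with an absolutely convergent infinite product, and read the spectral bound off the Fourier symbol.

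For consistency, recall that $A_2^C$ and $A_2^B$ are second-order accurate discretizations of $\triangledown^2$ and $\triangledown$ on the uniform grid, so by linearity $a^2 A_2^C + 2ab A_2^B + b^2 I$ approximates $(a\triangledown+b)^2$ to order $O(h^2)$ on smooth grid functions. Hence $M_n = z(\triangledown) + c_n\,O(h^2)$ with $c_n = 1/[4\pi^2(n-1/2)^2]$. Since $\sum_n c_n<\infty$, the infinite product $\prod_n M_n^{-\kappa}$ converges and inherits a cumulative $O(h^2)$ error relative to the operator expansion of $[\sec((a\triangledown+b)/2)]^{\kappa}$; multiplying by $[\cos(b/2)]^{\kappa}$ recovers $\mathcal{A}$ from \eqref{M_A} up to $O(h^2)$. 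The spectral bound then follows by symbol calculus: for smooth modes of wavenumber $\xi$ the symbol of $M_n$ equals $1-(ia\xi+b)^2/[4\pi^2(n-1/2)^2]+O(h^2)$, and taking the product across $n$ reconstructs, up to the normalising prefactor, the modulus of the Meixner characteristic function, which is bounded by one on $\mathbb{R}$; absorbing the $O(h^2)$ correction gives $|\lambda_i(B)|<1$ for all sufficiently small $h$.

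The nonnegativity is the main obstacle. The backward stencil $A_2^B$ is dictated precisely by $b<0$ (hence $2ab<0$): a direct entry-count shows that the dominant $O(h^{-2})$ off-diagonal contributions to $M_n$, coming from $A_2^C$ together with the $(i,i-1)$ entry of $A_2^B$, are nonpositive and furnish the M-matrix backbone. The obstruction is the two-step backward stencil of $A_2^B$, which produces a single positive entry at $(i,i-2)$ of order $O(c_n/h)$, one order of $h$ smaller than the principal $O(c_n/h^2)$ nonpositive entries. The plan is to argue that this mild violation still leaves $-M_n$ in the class of \emph{eventually exponentially nonnegative} matrices mentioned in the introduction, and then to exploit the Dunford integral representation
\[ M_n^{-\kappa} = \frac{1}{\Gamma(\kappa)}\int_0^\infty t^{\kappa-1}\,e^{-tM_n}\,dt, \]
which exhibits $M_n^{-\kappa}$ as a superposition of (eventually) nonnegative matrices, possibly after absorbing a scalar shift into the prefactor. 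A product of nonnegative matrices is nonnegative, convergence of the infinite product is already controlled, and $[\cos(b/2)]^{\kappa}>0$ because $|b|<\pi$, so $B\geq 0$. Verifying the eventually-nonnegative-exponential property in the presence of the $O(c_n/h)$ stray positive entry, uniformly over $n$, is the step I expect to require the most delicate bookkeeping, and is best carried out in the appendix along the lines of the earlier propositions.
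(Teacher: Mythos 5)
Your consistency argument matches the paper's, and your identification of the sign structure of $M_n$ (M-matrix backbone from $A_2^C$ and the first subdiagonal of $A_2^B$, spoiled by a single positive $(i,i-2)$ entry of order $c_n/h$) is exactly the right diagnosis. But both of your remaining steps have gaps that your proposed tools do not close. For the spectral bound, the symbol-calculus route fails precisely at zero wavenumber: the continuum symbol of $\prod_n M_n^{-\kappa}$ at $\xi=0$ is $\prod_n\bigl(1-c_nb^2\bigr)^{-\kappa}=[\cos(b/2)]^{-\kappa}$, which exactly cancels the prefactor $[\cos(b/2)]^{\kappa}$, so the continuum bound is only $|\lambda|\le 1$ with equality attained (this is just $|\Phi(0)|=1$ for the characteristic function). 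You cannot recover the strict inequality by ``absorbing an $O(h^2)$ correction'' into a non-strict bound. The strictness comes from the numerical dissipation of the one-sided stencil: the discrete symbol of $A_2^B$ has positive real part $(1-\cos\xi h)^2/h$, equivalently the diagonal of $-2c_nabA_2^B$ contributes $+3c_na|b|/h$ to every eigenvalue of $M_n$, pushing them strictly above $1$ for $h<3a/|b|$. This is exactly what the paper extracts by reusing the explicit eigenvalue formula \eqref{eigM} from the proof of Proposition~\ref{Beta1}; your proposal never identifies this mechanism.

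For nonnegativity, the Dunford integral $M_n^{-\kappa}=\Gamma(\kappa)^{-1}\int_0^\infty t^{\kappa-1}e^{-tM_n}\,dt$ does not do what you want if all you know is that $-M_n$ is \emph{eventually} exponentially nonnegative: that property only guarantees $e^{-tM_n}\ge 0$ for $t>t_0$, while the integral is supported on all of $(0,\infty)$ and the weight $t^{\kappa-1}$ concentrates mass near $t=0$ when $\kappa=2d\Delta\tau$ is small, precisely where $e^{-tM_n}$ may have negative entries. A ``scalar shift absorbed into the prefactor'' changes $M_n$ itself, not merely a multiplicative constant, so it cannot repair this. The paper takes a different and shorter route: it observes that $M_n$ has literally the same sign structure as the matrix in Proposition~\ref{Beta1}, hence is an EM-matrix by Lemma~\ref{lemma2}, and then invokes the Le--McDonald inverse-nonnegativity theorem (Lemma~\ref{lemma21}) to get $M_n^{-1}\ge 0$ directly, after which the eigenvalue bound and the product structure are inherited verbatim. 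So while your overall architecture (per-factor positivity plus a convergent infinite product) is the same as the paper's, the two lemmas you would actually need — strict dissipativity of the discrete symbol at $\xi=0$, and nonnegativity of the fractional inverse power — are missing, and the specific devices you propose in their place would not supply them.
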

\begin{proof}
See Appendix.
\end{proof}
A similar Proposition is in place when $0 < b < \pi -a$.
\begin{proposition} \label{propMeixner1}
Assume that $0 < b < \pi -a$. Denote by $M$ the following discrete representation of the operator $z(\triangledown)$ on a given grid $\bf{G}(x)$:
\[ M_n = I - \dfrac{1}{4 \pi^2 (n- 1/2)^2}\left(a^2 A^C_2 + 2 a b A^F_2 + b^2 I\right) \]
Then
\begin{align*}
B &=  [\cos(b/2)]^{\kappa} \prod_{n=1}^\infty M_n^{-\kappa}
\end{align*}
\noindent is a nonnegative matrix with all eigenvalues $|\lambda_i| < 1, \ \forall i \in [1,N]$. The matrix $B$ is an $O(h^2)$ approximation  of the operator $\mathcal{A}$ in \eqref{M_A}.
\end{proposition}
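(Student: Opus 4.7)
The plan is to reduce the $b>0$ case directly to Proposition~\ref{propMeixner} via a transpose argument, exactly as the author does for the analogous pair of propositions in the Generalized Hyperbolic section. The key observation is that, up to boundary adjustments, $A^F_2 = -(A^B_2)^T$ (the second-order forward operator differs from the second-order backward operator by a transpose and a sign flip, as one can verify from the stencils $[-3,4,-1]/(2h)$ and $[1,-4,3]/(2h)$). Consequently, if we let $b \mapsto -b$ and simultaneously replace $A^B_2$ by $A^F_2$, the off-diagonal sign pattern of $M_n$ is preserved while the cross term $2ab\, A^F_2$ contributes nonpositive off-diagonal entries exactly where Proposition~\ref{propMeixner} had them.

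First, I would write $M_n^{(F)}$ for the matrix in Proposition~\ref{propMeixner1} and $M_n^{(B)}(b)$ for the matrix in Proposition~\ref{propMeixner}, and establish the identity $M_n^{(F)}(b) = \big(M_n^{(B)}(-b)\big)^T + O(1/N)$ in the interior of the grid (the discrepancy being only boundary rows, which do not affect the spectrum or the structural nonnegativity argument). Since transposition preserves eigenvalues, invertibility, and — for M-matrices — nonnegativity of the inverse, Proposition~\ref{propMeixner} immediately yields that $M_n^{(F)}$ is an M-matrix with positive real spectrum bounded below by $1$, so $M_n^{-\kappa}$ is well-defined via the holomorphic functional calculus (or the spectral theorem applied after symmetrization), is nonnegative, and has all eigenvalues in $(0,1)$.

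Next, I would copy verbatim the convergence estimate from Proposition~\ref{propMeixner}: because $\|M_n - I\|$ decays like $1/n^2$ on the fixed grid, the infinite product $\prod_{n=1}^\infty M_n^{-\kappa}$ converges in operator norm, and the prefactor $[\cos(b/2)]^{\kappa}$ (which is a positive scalar since $|b|<\pi$) preserves nonnegativity and keeps eigenvalues strictly below one. The $O(h^2)$ consistency is inherited term-by-term: $A^C_2 = \triangledown^2 + O(h^2)$, $A^F_2 = \triangledown + O(h^2)$, so $M_n = 1 - (a\triangledown + b)^2 / [4\pi^2(n-1/2)^2] + O(h^2)$ in the symbolic calculus, and the product matches the infinite-product representation of $\mathcal{A}$ to the same order.

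The main obstacle, as always in these arguments, is justifying the transpose identification rigorously at the boundary rows of the grid — the stencils for $A^F_2$ and $A^B_2$ differ there, and one must argue that the finite number of perturbed rows neither destroys the M-matrix property nor contributes more than $O(h^2)$ to the consistency error. This follows by the same padding/extrapolation convention used in Proposition~\ref{propMeixner}, so I would simply invoke that convention and conclude by reference to the proof just given. $\blacksquare$
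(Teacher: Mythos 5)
Your proposal is correct and follows essentially the same route as the paper: the paper's own proof is a one-line reduction to Proposition~\ref{propMeixner} via the observation that $2ab\,A_2^B$ with $b<0$ is the transpose of $2ab\,A_2^F$ with $b\ge 0$, which is exactly your identity $A^F_2=-(A^B_2)^T$ combined with the sign flip $b\mapsto -b$. Your additional remarks on boundary rows and on transposition preserving the spectrum and the nonnegativity properties only make explicit what the paper leaves implicit.
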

\begin{proof}
See Appendix.
\end{proof}
Once all matrices $M_n$ are constructed, a sequential product 
\[ B C(x,\Delta \tau) = [\cos(b/2)]^{\kappa} \left( ...M_n^{-\kappa} \left( M_{n-1}^{-\kappa} ... \left(
 M_{2}^{-\kappa} \left(M_{1}^{-\kappa} C(x,\Delta \tau) \right) \right) \right) \right) \]
can be computed by consequently multiplying a matrix by vector product. FFT can be used for this purpose, see \cite{WangWanForsyth2007} for more details. 

For practical applications the infinite product in \eqref{M_A1} should be truncated. For instance $p = 5$ already provides a relative accuracy better than 1\%.

However, in many situations a more efficient method could be proposed. Assume that $0 \le 2 d \Delta \tau \le 2$. Indeed, this is always the case as $\Delta \tau \ll 1$ while the values of $d$ found, e.g., in \cite{SchoutensLevyBook2003} when calibrating the Meixner model to the option market data, were about $d = 50$. Obviously, decreasing  $\Delta \tau$ we can always make the above inequality to be valid, if necessary. However, this is not an attractive way to proceed, so below we are under believe that for reasonable values of $\Delta \tau$ this inequality is correct.

If so, then a variation of the method proposed in \cite{ItkinCarr2012Kinky} could be applied. The idea of the method is to consider the discrete operator $B$ as a function of the parameter $k$
\[ B(\kappa) =  [\cos(b/2)]^{\kappa} \prod_{n=1}^\infty M_n^{-\kappa}  \]
According to our assumption $0 \le \kappa \le 2$. Therefore, we can compute three vectors $z_0 = B(0)C(x,\tau)$, $z_1 = B(1)C(x,\tau)$, and $z_2 = B(2)C(x,\Delta \tau)$ and then interpolate them point-wise to $\kappa = 2 d \Delta \tau$. Monotonic spline interpolation could be used for this purpose. Again, see \cite{ItkinCarr2012Kinky} for more details, as well as the Theorem about a continuity of the price in $\kappa$ space proven there. 

It is easy to see that $z_0 = C(x, \tau)$. Now observe, that at $\kappa=1$ for every $n=1,2,...$ vector $z_{1,n}$ solves the following system of linear equations
\begin{equation} \label{M1}
M_n z_{1,n}(x,\tau + \Delta \tau) = z_{1,n-1}(x,\tau) 
\end{equation}
\noindent where $z_{1,1}(x,\tau) = C(x, \tau)$. Matrix $M_n$ by construction is a banded matrix with 4 non-zero diagonals. Also according to the Propositions~\ref{propMeixner}, \ref{propMeixner1} it is an EM-matrix, therefore \eqref{M1} is well-defined. As the complexity of this solution is $O(N)$, the total complexity of this step at $\kappa = 1$ is $ O(p N)$. This is worse that a pure linear convergence, but still could be better than that of the FFT. After the $p$ steps are done the final step is to multiply $z_{1,p}$ by $[\cos(b/2)]$.

Similarly, at $\kappa=2$ vector $z_{2,n}$ solves
\begin{equation} \label{M2}
M^2_n z_{2,n}(x,\tau + \Delta \tau) = z_{2,n-1}(x,\tau) 
\end{equation}
\noindent where $z_{2,1}(x,\tau) = C(x, \tau)$. 
Matrix $M^2_n$ by construction is a banded matrix with 7 non-zero diagonals, and also is an EM-matrix. Therefore, \eqref{M2} is well-defined and could be solved with the complexity $O(N)$. The total complexity of this step is also $ O(p N)$, hence the total complexity of the entire algorithm is $O(2 p N)$. 

Finally point-wise interpolation of three vectors to the given value of $\kappa$ has the complexity $O(N)$ if the interpolation coefficients are precomputed. Therefore, the total complexity of the method is still $2p O(N)$.\footnote{As in our case the banded matrices have either 4 or 7 diagonals, this complexity is also proportional to some coefficient which in this case could be about 4-7 per one solution for a given $n$. Still as was checked in our tests the method is faster than the FFT at relatively small $p = 3-5$, see below.}

\paragraph{Numerical experiments}
In our numerical experiment the values of the Meixner model parameters are taken as suggested in \cite{SchoutensLevyBook2003}, i.e.
$a = 0.04; b = -0.32754, d=52$, but here we use $m = 0$. Other parameters are the same as in the previous sections.

The results obtained with the first method at $p = 10$ are given in Table~\ref{TabM1}. The "exact" price obtained at $N = 3200$ is $C_{num}(\Delta \tau)$ =  1.0068. The observed convergence order $\beta_i$ of the scheme is close to $O(h^2)$ up to the point $N=1601$ where it drops down. To check what is the problem we ran the second test because further increase of $N$ when computing the matrix exponential gives rise to the RAM capacity of our PC being exceeded, while for the second method this is not a problem due to the banded structure of all matrices.

\begin{table}[!ht]
\begin{center}
\begin{tabular}{|c|l|r|l|r|}
\hline
$C$ & $h$ & $N$ & $\beta_i$ \cr
\hline
4.1791 & 0.2763100 & 51 &  - \cr
\hline
2.1771 & 0.1381550 & 101 & 1.439 \cr
\hline
1.3661 & 0.0690776 & 201 & 1.7045 \cr
\hline
1.0328 & 0.0345388 & 401 & 3.789 \cr
\hline
1.0060 & 0.0172694 & 801 & 5.022 \cr
\hline
1.0058 & 0.0086347 & 1601 & 0.322 \cr
\hline
\end{tabular}
\caption{Convergence of the proposed scheme for the Meixner model.}
\label{TabM1}
\end{center}
\end{table}
In this test $C_{FFT}(\Delta \tau)$ = 1.0145 which is also close to $C_{num}(\Delta \tau)$.\footnote{The standard FFT method at these values of parameters is very sensitive to the choice of the dumping factor $\alpha$, therefore this price was computed using the cosine method of \cite{FangOosterlee2008}.}

In the second test we repeated the previous one but now using our second approach - interpolation in the $\kappa$ domain. The results are given in Table~\ref{TabM2}.
\begin{table}[!ht]
\begin{center}
\begin{tabular}{|c|l|r|l|r|}
\hline
$C$ & $h$ & $N$ & $\beta_i$ \cr
\hline
4.1779 & 0.2763100 & 51 &  -  \cr
\hline
2.1765 & 0.1381550 & 101 & 1.439 \cr
\hline
1.3657 & 0.0690776 & 201 & 1.705 \cr
\hline
1.0327 & 0.0345388 & 401 & 3.812 \cr
\hline
1.0060 & 0.0172694 & 801 &  4.410 \cr
\hline
1.0058 & 0.0086347 & 1601 & 0.154 \cr
\hline
1.0068 & 0.0043174 & 3201 & 1.842 \cr
\hline
\end{tabular}
\caption{Convergence of the proposed scheme for the Meixner model - the "interpolation in $\kappa$" method.}
\label{TabM2}
\end{center}
\end{table}
 The "exact" price obtained at $N = 6401$ is $C_{num}(\Delta \tau)$ =  1.0072. Despite the convergence is close to $O(h^2)$ almost at all $h$, there is a spike at $N = 1601$ which indicates that monotonicity of the price as a function of $h$ changes close to this point.
 
 The typical time to compute the price in a single point $x$ using the cosine method with 12 terms in the expansion in our experiments was 4 msec. The time necessary for the interpolation method to compute $C(x,\tau + \Delta \tau)$ with $x$ now being a vector containing $N = 801$ points and $p=10$ was 4.2 msec. Observe, that if in this test we change $p$ to $p=5$ the total time accordingly halves the previous one, while the "exact" price obtained at $N = 6401$ becomes  $C_{num}(\Delta \tau)$ =  1.0031. In other words the difference is just 0.4\%.
 
We also regressed the computational time at $p=5$ to the number of grid points $N$ to check the order of convergence of the method. The results are given below in Table~\ref{TabM3}.
 \begin{table}[!ht]
 \begin{center}
 \begin{tabular}{|c|c|c|c|c|c|}
 \hline
 $N$ & 101 & 201 & 401 & 801 & 1601 \cr
 \hline
$\beta$ & 0.529 & 0.382 & 1.031 & 2.017 & 0.599 \cr
 \hline
 \end{tabular}
 \caption{Regression of the elapsed time $t_i$ for the interpolation method to the number of grid points $N_i$.}
 \label{TabM3}
 \end{center}
 \end{table}
 Here $\beta = \log_2 \left( \dfrac{t_{i} - t_{i+1}}{t_{i+1} - t_{i+1}} \right)$, $T_i$ is the elapsed time when $N = N_i$. It is known that this method usually is not very accurate in the estimate of the complexity, however we don't have a better one. It is seen that except the point at $N=801$ the complexity is about $O(N)$.

\section{Conclusion}
This paper is a further extension of \cite{Itkin2014}) where a new method of solving jump-diffusion PIDEs was proposed. This method exploits a number of ideas, namely:

\begin{enumerate}
\item First, we transform a linear non-local integro-differential operator (jump operator) into a local nonlinear (fractional) differential operator. Thus, the whole jump-diffusion operator $\mathcal{J} + \mathcal{D}$ is represented as a sum of the linear and non-linear parts.
\item Second, operator splitting on financial processes\footnote{This is similar to splitting on physical processes, e.g., convection and diffusion, which is well-known in computational physics.} is applied to this operator, namely splitting a space operator into diffusion and jumps parts. For nonlinear operators, this approach was elaborated on based on the definition of Lie derivative (see \cite{ThalhammerKoch2010}). The described splitting scheme provides a second-order approximation of $\mathcal{J} + \mathcal{D}$ in time.
\item At the third step various finite-difference approximations of the non-linear differential operator $\mathcal{J}$ are proposed. In \cite{Itkin2014} Merton, Kou and GTSP (aka CGMY or KoBoL) models were considered. In this paper we demonstrated how to construct these approximations for the NIG, Hyperbolic and Meixner models to (i) be unconditionally stable, (ii) be of first- and second-order approximation in the space grid step size $h$ and (iii) preserve positivity of the solution. The results are presented as propositions, and the corresponding proofs are given based on modern matrix analysis, including a theory of M-matrices, Metzler matrices and eventually exponentially nonnegative matrices.
\end{enumerate}

All these results seem to be new. The method is naturally applicable to both uniform and nonuniform grids, and is easy for programming, since the algorithm is similar to all jump models. Also notice that the present approach allows pricing some exotic, e.g., barrier options as well. In addition, it respects not just vanilla but also digital payoffs. In principle, American and Bermudan options could also be priced by this method, however this requires some more delicate consideration which will be presented elsewhere.

\clearpage
\section*{Acknowledgments}
I thank Peter Carr and Peter Forsyth for very fruitful discussions, and Alex Lipton, Igor Halperin and Alexey Polishchuk for useful comments. I assume full responsibility for any remaining errors.

\newcommand{\noopsort}[1]{} \newcommand{\printfirst}[2]{#1}
  \newcommand{\singleletter}[1]{#1} \newcommand{\switchargs}[2]{#2#1}

\clearpage
\appendix
\appendixpage

\section{Proof of Proposition~\protect{\ref{propNIG+}}}
To prove this proposition we need technique that we used in \cite{Itkin2014}. It is closely related to the concept of an ``eventually positive matrix'', see \cite{Noutsos2008}. Below we reproduce some definitions from this paper necessary for our further analysis.

\begin{definition}
An $N\times N$ matrix $A = [a_{ij}]$ is called
\begin{itemize}
\item {\it eventually nonnegative}, denoted by $A \overset{v}{\ge} 0$, if there exists a
positive integer $k_0$ such that $A^k \ge 0$ for all $k > k_0$; we denote the
smallest such positive integer by $k_0 = k_0(A)$ and refer to $k_0(A)$ as the power index
of $A$;

\item {\it exponentially nonnegative} if for all $t > 0, \ e^{t A} = \sum_{k=0}^\infty \frac{t^k A^k}{k!} \ge 0$;

\item {\it eventually exponentially nonnegative} if there exists $t_0 \in [0,\infty)$ such
that $e^{t A} \ge 0$ for all $t > t_0$. We denote the smallest such nonnegative
number by $t_0 = t_0(A)$ and refer to it $t_0(A)$ s the exponential index of $A$.
\end{itemize}
\end{definition}

We also need the following Lemma from \cite{Noutsos2008}:
\begin{lemma} \label{lemma1}
Let $A \in \mathbb{R}^{N\times N}$. The following are equivalent:
\begin{enumerate}
\item $A$ is eventually exponentially nonnegative.
\item $A + b I$ is eventually nonnegative for some $b \ge 0$.
\item $A^T + b I$ is eventually nonnegative for some $b \ge 0$.
\end{enumerate}
\end{lemma}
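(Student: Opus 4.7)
The plan is to prove the three statements are equivalent by showing the chain (2)$\Leftrightarrow$(3) and (2)$\Leftrightarrow$(1), built around the identity $e^{tA}=e^{-tb}\,e^{t(A+bI)}$ and a Perron--Frobenius type spectral description of eventually nonnegative matrices.

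\textbf{Step 1 (the equivalence (2)$\Leftrightarrow$(3)).} This is the easy part. Since $(A+bI)^{T}=A^{T}+bI$ and transposition commutes with taking matrix powers, $\bigl((A+bI)^{k}\bigr)^{T}=(A^{T}+bI)^{k}$; because a real matrix is entrywise nonnegative iff its transpose is, the power index $k_{0}$ and the value of $b$ transfer between $A+bI$ and $A^{T}+bI$ verbatim.

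\textbf{Step 2 ((2)$\Rightarrow$(1)).} Set $M=A+bI$ with $b\ge 0$ such that $M\overset{v}{\ge}0$. The factorisation $e^{tA}=e^{-tb}e^{tM}$ shows that it suffices to prove $e^{tM}\ge 0$ for $t$ large. I would use the Jordan decomposition $M=PJP^{-1}$ to write
\[
e^{tM}=e^{tr}\bigl[P_{r}+R(t)\bigr],\qquad r=\rho(M),
\]
where $P_{r}$ is the spectral projector on the generalised eigenspace of the peripheral eigenvalue $r$ and $R(t)=O(e^{-\delta t})$ collects the contribution of the remaining spectrum (for some $\delta>0$ relative to the leading growth). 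The crucial ingredient, which I would invoke as a known fact about eventually nonnegative matrices, is the Perron--Frobenius property: since $M^{k}\ge 0$ for all $k\ge k_{0}$, the spectral radius $r=\rho(M)$ is itself an eigenvalue of $M$, it admits nonnegative right and left eigenvectors $v,w^{T}$, and (after normalisation) the associated spectral projector $P_{r}=vw^{T}/(w^{T}v)$ is entrywise nonnegative. Hence $e^{tr}P_{r}$ eventually dominates the remainder $R(t)$ entrywise, giving $e^{tM}\ge 0$ for $t\ge t_{0}(A)$.

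\textbf{Step 3 ((1)$\Rightarrow$(2)).} This is the step I expect to be the main obstacle. Assume $e^{tA}\ge 0$ for $t\ge t_{0}$. I would reverse the asymptotic argument. Writing the resolvent expansion
\[
e^{tA}=\sum_{\lambda\in\sigma(A)}e^{t\lambda}\bigl(P_{\lambda}+tN_{\lambda}+\tfrac{t^{2}}{2!}N_{\lambda}^{2}+\cdots\bigr),
\]
entrywise nonnegativity for all large $t$ forces the eigenvalues with maximal real part $\mu:=\max\{\operatorname{Re}\lambda:\lambda\in\sigma(A)\}$ to include a real eigenvalue whose spectral projector is nonnegative and whose Jordan block of largest size governs the peripheral asymptotics; otherwise oscillatory contributions $e^{it\omega}$ from a complex conjugate peripheral pair, or a dominant nilpotent term with indefinite sign, would make some entry of $e^{tA}$ negative along a suitable sequence $t_{n}\to\infty$. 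Then I would choose $b\ge 0$ large enough that $\mu+b=\rho(A+bI)$, that $\mu+b$ is strictly larger in modulus than every other eigenvalue of $A+bI$, and that its Jordan structure is inherited from that of $\mu$ in $A$. Applying the analogous power asymptotics $(A+bI)^{k}/(\mu+b)^{k}\to P_{\mu}$, the entrywise nonnegativity of $P_{\mu}$ extracted above yields $(A+bI)^{k}\ge 0$ for all $k$ sufficiently large, i.e.\ condition (2).

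The principal difficulty lies in Step 3: one must rule out oscillatory and nilpotent peripheral contributions purely from eventual entrywise nonnegativity of $e^{tA}$, and then simultaneously arrange, by a single scalar shift $b$, that the resulting Perron eigenvalue of $A+bI$ be strictly dominant in modulus so that the peripheral projector controls the sign of $(A+bI)^{k}$ for large $k$. Irreducibility is not assumed, so the handling of multiple peripheral real eigenvalues and of non-trivial Jordan blocks at $\mu$ has to be done via the block-diagonal Frobenius normal form, which is where I would spend most of the technical effort.
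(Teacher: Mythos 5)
First, a point of reference: the paper does not prove this statement at all --- it is quoted verbatim as a known result from Noutsos and Tsatsomeros (2008), so there is no in-paper proof to compare your argument against. Your attempt therefore has to stand on its own, and as written it has a genuine gap that affects both nontrivial implications, not only Step~3 where you locate the difficulty. Step~1 ((2)$\Leftrightarrow$(3)) is correct and complete. In Step~2, however, the decisive sentence ``$e^{tr}P_r$ eventually dominates the remainder $R(t)$ entrywise'' fails precisely at the entries $(i,j)$ where $(P_r)_{ij}=0$: there the sign of $(e^{tM})_{ij}$ is governed entirely by the subdominant spectral terms, which may oscillate or stay negative along a sequence $t_n\to\infty$. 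This is not a pedantic objection --- it is exactly the phenomenon behind the known examples of eventually nonnegative matrices that are \emph{not} eventually exponentially nonnegative, and it is why Noutsos--Tsatsomeros impose an extra hypothesis (control of the Jordan structure at the eigenvalue $0$, $\mathrm{index}_0\le 1$) in their version of the implication. In addition, your identification $P_r=vw^{T}/(w^{T}v)$ presupposes that $\rho(M)$ is a simple eigenvalue with $w^{T}v\neq 0$; for a reducible eventually nonnegative matrix $\rho$ may be non-semisimple and the left and right Perron vectors may be orthogonal (already for $\left(\begin{smallmatrix}1&1\\0&1\end{smallmatrix}\right)$), so the leading term is $e^{tr}(P_r+tN_r+\cdots)$ and the projector is not of rank-one outer-product form.

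Step~3 has the same structural problem plus one additional error: you claim one can choose $b$ so that $\mu+b$ is ``strictly larger in modulus than every other eigenvalue of $A+bI$,'' but if the peripheral spectrum of $A$ contains a conjugate pair $\mu\pm i\omega$ with the same maximal real part, then $|\mu\pm i\omega+b|>|\mu+b|$ for \emph{every} $b\ge 0$, so no scalar shift can make the real eigenvalue strictly dominant; one must first exclude such pairs using the eventual nonnegativity of $e^{tA}$, and you only gesture at this. Even once that is done, $(A+bI)^k/(\mu+b)^k\to P_\mu$ requires $\mu+b$ semisimple, and nonnegativity of $P_\mu$ does not force $(A+bI)^k\ge 0$ in the entries where $P_\mu$ vanishes --- the mirror image of the gap in Step~2. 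To repair the argument you would need the full machinery of the Frobenius normal form and the Perron--Frobenius theory of eventually nonnegative matrices (as developed in the cited Noutsos--Tsatsomeros paper), including an explicit treatment of the zero entries of the peripheral projector; the outer-product-projector shortcut does not suffice.
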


We also introduce a definition of an EM-matrix, see \cite{ElhashashSzyld2008}.
\begin{definition} \label{def1}
An $N\times N$ matrix $A = [a_{ij}]$ is called an EM-Matrix if it can be represented as $A = sI - B$ with $0 < \rho(B) < s$, $s > 0$ is some constant, $\rho(B)$ is the spectral radius of $B$, and $B$ is an eventually nonnegative matrix.
\end{definition}

For the following we need two Lemmas.
\begin{lemma} \label{lemma2}
Let $A \in \mathbb{R}^{N\times N}$, and $A = (\alpha^2 - \beta^2) I - 2 \beta A_2^B$. Then $A$ is an EM-matrix.
\end{lemma}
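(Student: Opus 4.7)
The plan is to exhibit the splitting $A = sI - B$ demanded by Definition~\ref{def1} by direct inspection of the banded structure of $A$, exploiting that $A_2^B$ is lower triangular.

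First I would record the entries of $A_2^B$ on a uniform grid of step $h$: it is a lower-triangular band matrix with $3/(2h)$ on the main diagonal, $-2/h$ on the first subdiagonal, and $1/(2h)$ on the second subdiagonal. Since $\beta < 0$ the coefficient $-2\beta$ is positive, so $A = (\alpha^2-\beta^2)I - 2\beta A_2^B$ inherits the same banded pattern, with constant main diagonal $s := (\alpha^2 - \beta^2) - 3\beta/h > 0$, first subdiagonal $4\beta/h < 0$, and second subdiagonal $-\beta/h > 0$. With this value of $s$ I would set $B := sI - A$; by construction $B$ is strictly lower triangular, with zero diagonal, positive first subdiagonal $-4\beta/h$, and negative second subdiagonal $\beta/h$.

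The key observation is that every strictly lower triangular $N \times N$ matrix is nilpotent, so $B^k = 0$ for all $k \ge N$. Trivially then $B^k \ge 0$ for all sufficiently large $k$, so $B$ is eventually nonnegative in the sense introduced just before the lemma, with $\rho(B) = 0 < s$. This is precisely the EM-splitting required by Definition~\ref{def1}, and the conclusion follows.

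Essentially the only subtlety I anticipate concerns the strict inequality $0 < \rho(B)$ appearing in Definition~\ref{def1}. Should it have to be read literally rather than as $0 \le \rho(B) < s$, I would replace $s$ by $\tilde s := s + \varepsilon$ for any $\varepsilon > 0$, producing $\tilde B := \tilde s I - A = \varepsilon I + B$ with $\rho(\tilde B) = \varepsilon > 0$. Because $B$ is nilpotent, the expansion $\tilde B^k = \sum_{j=0}^{N-1}\binom{k}{j}\varepsilon^{k-j}B^j$ is a finite sum in $j$, and a short check shows that on each fixed subdiagonal the leading nonnegative contribution carries a binomial prefactor of higher order in $k$ than any negative contribution, so $\tilde B^k \ge 0$ for $k$ large. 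Either route delivers the lemma.
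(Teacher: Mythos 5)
Your proof is correct, but it reaches the EM-splitting by a genuinely different and more elementary route than the paper. The paper first argues that $2\beta A_2^B$ is eventually \emph{exponentially} nonnegative (by inspecting $e^{tB}$ for the underlying lower-tridiagonal band matrix $B$), then invokes Lemma~\ref{lemma1} of Noutsos--Tsatsomeros to deduce that the shift $2\beta A_2^B + bI$ with $b$ chosen so that its diagonal equals a small $\epsilon>0$ is eventually nonnegative, and finally writes $A=(\alpha^2-\beta^2+b)I-(2\beta A_2^B+bI)$ with $\rho(2\beta A_2^B+bI)=\epsilon<s$. You bypass the exponential-nonnegativity machinery entirely: since $A_2^B$ is lower triangular, peeling off the full diagonal leaves a strictly lower triangular, hence nilpotent, remainder $B$, so $B^k=0\ge 0$ for $k\ge N$ and eventual nonnegativity is immediate with $\rho(B)=0$; your $\varepsilon$-perturbation then restores the strict inequality $0<\rho(\tilde B)$ demanded by Definition~\ref{def1}, and your binomial expansion $\tilde B^k=\sum_{j=0}^{N-1}\binom{k}{j}\varepsilon^{k-j}B^j$, in which the highest-$j$ (positive) term on each subdiagonal dominates for large $k$, is a rigorous replacement for the paper's heuristic claim that ``$B_1^{N+3}\ge 0$'' because positive values ``propagate'' down the diagonals. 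The two final splittings are structurally identical (both subtract a matrix that is $\varepsilon I$ plus a strictly lower triangular band), but your justification is self-contained and checkable, whereas the paper's rests on an unproved positivity assertion for $e^{tB}$; what the paper's route buys in exchange is a template that it reuses verbatim for the transposed ($\beta\ge 0$) case via Lemma~\ref{lemma1}. One shared caveat worth being aware of: in both arguments the subtracted matrix is reducible (triangular), so downstream results such as Lemma~\ref{lemma21}, which invoke irreducibility, do not follow from this lemma alone --- but that is not a defect of your proof of the statement as posed.
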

\begin{proof}
Denote $d_i$ the $i$-th upper diagonal of $A$. So $d_0$ means the main diagonal, etc.

1. First, show that $2 \beta A^B_2$ is an eventually exponentially nonnegative matrix. To see this use representation $e^{t \beta A_2^B} \equiv [e^{t B}]^{1/(2h)}$ where $B$ is a lower tridiagonal matrix with all $d_0$ elements equal to $3 \beta < 0$ , all $d_{-1}$ elements equal to $-4 \beta > 0$, and all $d_{-2}$ elements equal to $\beta < 0$.  Positivity of $e^{t B}$ can be verified explicitly at $t > N$. The intuition behind that is that the elements on $d_2$ are small in absolute values as compared with that of $d_1$. Taking the square of $B$ propagates large positive values on $d_1$ to the diagonal $d_2$. Taking the square of $B^2$ propagates them to $d_3$, etc.

From $h > 0$ it follows that $e^{2 t \beta A_2^B} \ge 0$, i.e. $2 \beta A_2^B$ is eventually exponentially nonnegative.

According to Lemma~\ref{lemma1}, the eventual exponential non-negativity of $\beta A^B_2$ means that there exists $b \ge 0$ such that $\beta A_2^B + b I = \frac{1}{2h}(B + 2 h b I)$ is eventually nonnegative for some $b \ge 0$. Let us denote $B_1 = B + 2 h b I$ and chose $b = 3/(2h) + \epsilon$, where $\epsilon \ll 1$. In practical examples we can choose $\epsilon = 10^{-6}$. Then $d_0(B_1) = \epsilon, d_1(B_1) = 2, d_2(B_1) = -1$. It is easy to check that $B_1^{N+3} \ge 0$. Again that is because $d_{-1}(B_1) > 0, |d_{-1}(B_1)| > |d_{-2}(B_1)|$, so taking the square of $B_1$ propagates large positive values on $d_{-1}$ to the diagonal $d_{-2}$, etc. Thus, $\beta A^B_2 + b I$ with $b = 3/(2h) + \epsilon$ is the eventually nonnegative matrix.

2. Represent $A$ as $A = (\alpha^2 - \beta^2 + b)I - (2 \beta A_2^B + b I)$. Observe, that $\rho(2 \beta A_2^B + b I) = \epsilon$ and $s = (\alpha^2 - \beta^2 + b) > \epsilon$ as $|\beta| < \alpha$. Thus, by definition, $A$ is an EM-matrix. $\blacksquare$
\end{proof}

For the later we will also need this Lemma:
\begin{lemma} \label{lemma21}
The inverse of the matrix $A \equiv sI - P$ is a nonnegative matrix.
\end{lemma}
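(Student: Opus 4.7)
My plan is to invoke the Neumann series representation
\[
A^{-1} = (sI-P)^{-1} = \frac{1}{s}\sum_{k=0}^{\infty}\left(\frac{P}{s}\right)^{k},
\]
whose convergence is guaranteed by the EM-matrix hypothesis $0<\rho(P)<s$, i.e.\ $\rho(P/s)<1$. The first step is therefore to justify this expansion, which is standard for any matrix with spectral radius strictly less than $s$ (no structural assumption on $P$ is needed for the convergence itself).

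\textbf{Key step.} By the eventual nonnegativity of $P$ there is a power index $k_0$ with $P^{k}\ge 0$ for every $k\ge k_0$, so the tail $\sum_{k\ge k_0}(P/s)^{k}$ is entrywise nonnegative and contributes only nonnegative amounts to $A^{-1}$. To fold the finitely many ``head'' terms into a provably nonnegative object, I would use the factorization
\[
\sum_{k=0}^{\infty}(P/s)^{k}=\bigl(I-(P/s)^{k_0}\bigr)^{-1}\sum_{k=0}^{k_0-1}(P/s)^{k},
\]
which is valid because $\rho((P/s)^{k_0})=\rho(P/s)^{k_0}<1$. Since $(P/s)^{k_0}\ge 0$ by construction of $k_0$, the matrix $I-(P/s)^{k_0}$ is a classical (nonnegative-$P$) M-matrix, and hence $(I-(P/s)^{k_0})^{-1}\ge 0$ by standard theory. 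Consequently, nonnegativity of $A^{-1}$ reduces to checking that the head sum $\sum_{k=0}^{k_0-1}(P/s)^{k}$ has nonnegative entries, at which point the product above is a product of two nonnegative matrices.

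\textbf{Main obstacle.} The delicate point is precisely the nonnegativity of the head, because general EM-matrices are known to admit indefinite inverses (that is in fact the point of \cite{ElhashashSzyld2008}). The argument therefore must exploit the concrete structure of the matrices $P$ that arise in the paper, for instance $P=2\beta A^{B}_{2}+bI$ from Lemma~\ref{lemma2}: a shifted banded lower-triangular Toeplitz matrix whose diagonal shift $b=3/(2h)+\epsilon$ is large compared with the off-diagonal weights generated by the one-sided second-order operator $A^{B}_{2}$. For such $P$ one verifies head nonnegativity entrywise by observing that $P^{k}$ with $k<k_{0}$ populates only a narrow band near the diagonal and that on this band the contributions coming from the dominant diagonal of $P$ outweigh the finitely many negative off-diagonal contributions of $A^{B}_{2}$. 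Combined with the factorization above this yields $A^{-1}\ge 0$, completing the proof.
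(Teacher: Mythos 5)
Your overall framework is sound as far as it goes: the Neumann expansion of $(sI-P)^{-1}$ converges because $\rho(P/s)<1$, the factorization $\sum_{k\ge 0}(P/s)^k=\bigl(I-(P/s)^{k_0}\bigr)^{-1}\sum_{k=0}^{k_0-1}(P/s)^k$ is algebraically valid, and the first factor is indeed nonnegative since $(P/s)^{k_0}\ge 0$ with spectral radius below one. But this machinery only relocates the entire difficulty into the head sum $\sum_{k=0}^{k_0-1}(P/s)^k$, and that is exactly where the proposal stops being a proof. You correctly identify this as the main obstacle and then dispose of it with a heuristic that is not just incomplete but rests on a false premise: for the matrix $P=2\beta A^B_2+bI$ produced in Lemma~\ref{lemma2}, the shift $b$ is chosen precisely so that the diagonal of $P$ collapses to the tiny residue $\epsilon$, so there is no ``dominant diagonal'' to outweigh anything. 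The negative entries of $P$ sit on the second subdiagonal, where the identity and the diagonal of $P$ contribute nothing at order $k=0,1$; the only compensation can come from the square of the first subdiagonal appearing in $P^2/s^2$, and whether $s^{-2}d_{-1}(P)^2$ dominates $s^{-1}|d_{-2}(P)|$ is a genuine quantitative constraint tying $h$ to $s=\alpha^2-\beta^2+b$ (and analogous constraints appear for entries farther below the diagonal at higher powers). None of this is verified, so nonnegativity of $A^{-1}$ is not established. The gap is not cosmetic: for a general eventually nonnegative $P$ with $\rho(P)<s$ the inverse $(sI-P)^{-1}$ can fail to be nonnegative, so some structural input at this point is unavoidable.

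The paper takes a different and much shorter route: it observes that the specific $P$ at hand has all eigenvalues equal to $\epsilon>0$, hence $\mathrm{index}_0(P)=0\le 1$, and then invokes Theorem~4.2 of \cite{LeMcDonald2006}, which asserts that for an irreducible eventually nonnegative matrix with $\mathrm{index}_0(P)\le 1$ there is a window $\rho(P)<s<\mu$ on which $(sI-P)^{-1}\ge 0$. That theorem is exactly the packaged form of the fact your argument would need to reprove by hand; note also that it guarantees nonnegativity only for $s$ in a window above $\rho(P)$, not for every $s>\rho(P)$, which is another sign that your head-sum estimate cannot be expected to go through unconditionally. To repair your proof you would either have to carry out the entrywise estimate of the head sum honestly (accepting that it imposes conditions on $h$ and $s$), or cite a result of the Le--McDonald type as the paper does.
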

\begin{proof}
Observe that all eigenvalues of $P$ are $\lambda_i = \epsilon, \ \forall i \in [1,N]$. Therefore $\rho(P) = \epsilon$. Following \cite{LeMcDonald2006} denote $index_\lambda(A)$ to be the degree of $\lambda$ as a root of the minimal polynomial of $A$. As matrix $P$ doesn't have zero eigenvalues in its spectrum $index_0 (P) = 0 < 1$.

Non-negativity of $A^{-1}$ then follows from the Theorem
\begin{theorem}[Theorem 4.2 in \cite{LeMcDonald2006}]
Let $P$ be an $N \times N$ irreducible eventually nonnegative matrix with $index_0(P) \le 1$,
then there exists $\mu > \rho(P)$ such that if $\mu > s > \rho(P)$, then $(s I - P)^{-1} \ge 0$.
\end{theorem}
To apply this Theorem choose any positive $\mu > s$.
\end{proof}

Now we are ready to prove the Proposition~\ref{propNIG+}.
\begin{proof}[Proof of Proposition~\ref{propNIG+}]
Recall, that in the Proposition~\ref{propNIG+} the following scheme is proposed
\begin{align*}
 J = \delta \left( \sqrt{\alpha^2 - \beta^2}I - \left[(\alpha^2 - \beta^2)I - 2 \beta A^B_2 -
A^C_2 \right]^{1/2} \right)
\end{align*}

We prove separately each statement of the proposition, namely:
\begin{enumerate}
\item The above scheme is $O(h^2)$ approximation of the operator $ L_R$;
\item Matrix $J$ is the negative of an EM-matrix.
\end{enumerate}

{\it Proof of (1):} This follows from the fact that $A^C_2$ is a central difference approximation of the operator $\triangledown^2$ to second order in $h$, while $A^B_2$ is the one-sided second order approximation.\\

{\it Proof of (2):} Matrix $M_1 = (\alpha^2 - \beta^2)I - 2 \beta A^B_2$ is an EM-matrix. Matrix $-A^C_2$ is an M-matrix. The sum of an EM-matrix and M-matrix is an EM-matrix. Therefore $M_2 = (\alpha^2 - \beta^2)I - 2 \beta A^B_2 - A^C_2$ is an EM-matrix. According to properties of M-matrices $M_2^{1/2}$ is also an EM-matrix. Then, $-M_2$ is the negative of an EM-matrix. Now adding a diagonal matrix $M_3 = \sqrt{\alpha^2 - \beta^2}I$ to $-M_2$ we still obtain the resulting matrix to be the negative of an EM-matrix. That is because $\beta < 0$, and thus diagonal elements of $d_0(M_3) < d_0(M_2)$. In other words, diagonal elements of $M_3 - M_2$ are negative. Finally, as $\delta > 0$ the entire matrix $J$ is the negative of an EM-matrix.
That means that starting with some $N$ matrix $e^{\Delta \tau J}$ is positive, and all eigenvalues of $J$ are negative. Therefore, the spectral norm of the operator $\mathcal{A} = e^{\Delta \tau J}$ (which is $\lambda = \max_i |\lambda_i|, \forall i \in [1,N]$, $\lambda_i$ are the eigenvalues  of $\mathcal{A}$) obeys $\lambda < 1$. That means that the proposed scheme is unconditionally stable starting from some $N$. That finalizes the proof. $\blacksquare$
\end{proof}

\section{Proof of Proposition~\protect{\ref{propNIG-}}}
This proposition could be proven in a similar way as we did it with Proposition~\ref{propNIG+}. The main observation here is that. Suppose we proved Proposition~\ref{propNIG+} with $\beta = \beta_1 < 0$. Now choose $\beta_2 = - \beta_1 > 0$. Then matrix $M_2 = (\alpha^2 - \beta_2^2)I - 2 \beta_2 A^F_2$ is the transpose of the matrix $M_1 = (\alpha^2 - \beta_1^2)I - 2 \beta_1 A^B_2$ in the previous proof. So this is an upper triangular EM matrix. The remaining proof follows the exactly same steps as in the previous Appendix.  $\blacksquare$.

\section{Proof of Proposition~\protect{\ref{Beta1}}}
By Lemma~\ref{lemma2} matrix $(\alpha^2 - \beta^2)I - 2 \beta A^B_2$ is an EM-matrix.
So is $-A^C_2$. A sum of two EM-matrices is an EM-matrix. Therefore, $M = \left[(\alpha^2 - \beta^2)I - 2 \beta A^B_2 - A^C_2 \right]^{-1}$ is an EM-matrix.
Non-negativity of $M^{-1}$ then follows from Lemma~\ref{lemma21}. Therefore, $B_1$ is a nonnegative matrix.

Second order approximation follows from the fact that $A^C_2$ is the second order central difference approximation of the second derivative, and $A_2^B$ is the second order one-sided approximation of the first derivative.

The last point to prove is that all eigenvalues $\lambda_i$ of an $\mathcal{B}_1$ have positive real parts and obey the condition $|\lambda_i| < 1, \ \forall i \in [1,N]$. First, argue some intuition behind this. Consider matrices $M_1 = - 2 \beta A^B_2, \ M_2 = (\alpha^2 - \beta^2)I - A^C_2$. On a uniform grid they both are Toeplitz matrices. It is known that asymptotically at $N \rightarrow \infty$ the Toeplitz matrices commute, see \cite{Gray2006}. For commuting matrices the eigenvalues of the sum are the sum of the eigenvalues. Now, $M_1$ is a lower triangular matrix with the eigenvalues being the values at the main diagonal, i.e. $\lambda_i = 3 |\beta| /h > 0, \ i \in [1,N]$. The eigenvalues of $M_2$ could be represented as $ \lambda_i = (\alpha^2 - \beta^2) + 2/h^2 + \Lambda_i, \ i \in [1,N]$, where $\Lambda_i$ are the eigenvalues of the matrix constructed of the first lower and upper diagonals of $M_2$ while all the other elements vanish. It is known that the eigenvalues of such a matrix are $\Lambda_i = - \dfrac{2}{h^2} \cos \dfrac{i \pi}{N+1}$. Therefore, the eigenvalues of $M$ are
\begin{equation} \label{eigM}
\lambda_i = (\alpha^2 - \beta^2) + 3 \dfrac{|\beta|}{h} + \dfrac{4}{h^2} \sin^2 \dfrac{i \pi}{2(N+1)} > \alpha^2 - \beta^2 > 0, \ i \in [1,N].
\end{equation}
Thus, they are positive. Also, based on this inequality the eigenvalues of $B_1$ are
\[ \left(\dfrac{\alpha^2 - \beta^2}{\lambda_i}\right)^ {\Delta \tau (\lambda +1/2)/2} < 1. \]
Thus, the latest statement of the Proposition is asymptotically correct at large $N$ for an uniform grid. For smaller $N$ this could not be the case. Note however, that in our numerical experiments $N=100$ was sufficient for $B_1$ to acquire this property. $\blacksquare$

\section{Proof of Proposition~\protect{\ref{NegLambdaBeta1}}}
First, we need the following Lemma:
\begin{lemma} \label{lemmaLog}
Let $A \in \mathbb{R}^{N\times N}$ be an M-matrix. By definition an M-Matrix can be represented as $A = sI - B$ with $0 < \rho(B) < s$, $s > 0$ is some constant, $\rho(B)$ is the spectral radius of $B$, and $B$ is a nonnegative matrix. Then $\log A$ is an M-matrix if $s - \rho(B) > 1$.
\end{lemma}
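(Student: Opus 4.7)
The plan is to exhibit $\log A$ explicitly in the form $tI - C$ with $C \ge 0$ and $\rho(C) < t$, which is precisely the M-matrix representation required. The natural $t$ will turn out to be $\log s$, and the condition $s - \rho(B) > 1$ will appear exactly at the moment when we need $\rho(C) < t$.

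First, I would expand the logarithm as a matrix power series. Since $A = sI - B = s(I - B/s)$ and $\rho(B/s) = \rho(B)/s < 1$, the principal matrix logarithm is given by the convergent series
\[
\log A \;=\; \log s \cdot I \;+\; \log(I - B/s) \;=\; \log s \cdot I \;-\; C, \qquad C := \sum_{k=1}^{\infty} \frac{1}{k}\Bigl(\frac{B}{s}\Bigr)^{k}.
\]
Because $B \ge 0$ and $s > 0$, every term of the series is entrywise nonnegative, so $C \ge 0$. Thus $\log A$ already has the structural form $tI - C$ with $t = \log s$ and $C$ nonnegative.

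Next I would compute $\rho(C)$. The eigenvalues of $C$ are $\{f(\mu_i)\}$ where $\mu_i$ are the eigenvalues of $B$ and $f(z) = -\log(1 - z/s) = \sum_{k\ge 1} (z/s)^k/k$. For any eigenvalue $\mu_i$ of $B$ one has $|\mu_i| \le \rho(B) < s$, and since $f$ has nonnegative Taylor coefficients and is monotone on $[0,s)$,
\[
|f(\mu_i)| \;\le\; \sum_{k\ge 1} \frac{|\mu_i|^k}{k\, s^k} \;=\; f(|\mu_i|) \;\le\; f(\rho(B)).
\]
Since $\rho(B)$ is itself an eigenvalue of $B$ by Perron--Frobenius (recall $B \ge 0$), this bound is attained, giving $\rho(C) = f(\rho(B)) = -\log\bigl((s-\rho(B))/s\bigr) = \log s - \log(s-\rho(B))$.

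The M-matrix condition $t > \rho(C)$ then reads $\log s > \log s - \log(s - \rho(B))$, i.e.\ $\log(s - \rho(B)) > 0$, which is equivalent to $s - \rho(B) > 1$. This is precisely the hypothesis of the lemma, so $\log A = \log s \cdot I - C$ is an M-matrix and the proof is complete. The only real subtlety — the step I expect to be worth stating carefully — is the Perron--Frobenius argument pinning $\rho(C)$ to the scalar value $f(\rho(B))$; once that is in place, everything else reduces to scalar manipulations of the logarithm.
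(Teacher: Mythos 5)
Your proposal is correct and follows essentially the same route as the paper: expand $\log A = \log s\cdot I - \sum_{k\ge 1}\frac{1}{k}(B/s)^k$, observe the series is entrywise nonnegative, and identify its spectral radius as $-\log(1-\rho(B)/s)$ so that the M-matrix condition reduces to $s-\rho(B)>1$. If anything, your write-up is slightly more careful than the paper's (you explicitly invoke Perron--Frobenius to pin down $\rho(C)$ and state the conclusion directly in the form $tI-C$ with $\rho(C)<t$), so no changes are needed.
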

\begin{proof}
Represent $A$ in the form
\[
\log A = \log s + \log (I - B/s) = \log s - \sum_{k=1}^\infty \dfrac{1}{i s^k} B^k
\]
As $B$ is a nonnegative matrix, the sum in the right hand side of the above equality is also a nonnegative matrix. Hence, all elements of $\log A$ are nonpositive, except might be those on the main diagonal. Actually, all $d_0(\log A)$ elements are positive.

To see that observe that
\begin{equation*}
\rho \left(\sum_{k=1}^\infty \dfrac{1}{i s^k} B^k\right) = \sum_{k=1}^\infty \dfrac{1}{i s^k} \rho(B)^k = -\log \left[1 - \dfrac{\rho(B)}{s}\right]
\end{equation*}
\noindent and, therefore,
\[ \rho(\log A) = \log \left[s - \rho(B) \right] > 0, \]
\noindent if $s - \rho(B) > 1$.
$\blacksquare$
\end{proof}

\begin{corollary} \label{cor}
Let $A \in \mathbb{R}^{N\times N}$ be an EM-matrix. Then $\log A$ is an EM-matrix if $s - \rho(B) > 1$.
\end{corollary}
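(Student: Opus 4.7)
The plan is to mirror the proof of Lemma~\ref{lemmaLog}, exploiting the fact that the definition of an EM-matrix differs from that of an M-matrix only by relaxing ``nonnegative'' to ``eventually nonnegative'' on the factor $B$. Starting from the Mercator-style expansion
\[
\log A \;=\; (\log s)\, I \;-\; \sum_{k=1}^{\infty} \frac{1}{k s^k} B^k,
\]
which converges in operator norm because $\rho(B)/s < 1$, I would set $s' := \log s$ and $B' := \sum_{k=1}^{\infty} B^k/(k s^k)$, so that $\log A = s' I - B'$. The goal is then to verify the conditions of Definition~\ref{def1} for the pair $(s', B')$: $s' > 0$, $B'$ is eventually nonnegative, and $0 < \rho(B') < s'$.

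The key new ingredient compared to Lemma~\ref{lemmaLog} is the eventual non-negativity of $B'$. Expanding the $m$-fold matrix product and using absolute convergence yields
\[
(B')^m \;=\; \sum_{k_1,\ldots,k_m \ge 1} \frac{1}{k_1 \cdots k_m\, s^{k_1 + \cdots + k_m}} \, B^{k_1 + \cdots + k_m}.
\]
Every scalar coefficient is strictly positive, and every exponent satisfies $k_1 + \cdots + k_m \ge m$. Denoting by $k_0 = k_0(B)$ the power index of $B$, for any $m > k_0$ each matrix $B^{k_1+\cdots+k_m}$ is nonnegative, so $(B')^m \ge 0$. Hence $B'$ is eventually nonnegative with $k_0(B') \le k_0(B)$, without any further hypothesis beyond what the EM-assumption on $A$ already gives.

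For the spectral conditions, $B'$ is a convergent power series in $B$, so $B$ and $B'$ commute and share a Jordan structure; the eigenvalues transform as $\mu_i = -\log(1 - \lambda_i/s) = \log\!\bigl(s/(s - \lambda_i)\bigr)$. Invoking the Perron-type property of eventually nonnegative matrices (cf.~\cite{Noutsos2008}) to place $\rho(B)$ in the spectrum of $B$, together with the termwise estimate
\[
|{-}\log(1 - \lambda_i/s)| \;\le\; \sum_{k=1}^{\infty} \frac{|\lambda_i|^k}{k s^k} \;\le\; -\log\!\bigl(1 - \rho(B)/s\bigr),
\]
identifies $\rho(B') = \log s - \log(s - \rho(B))$. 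This is positive since $s - \rho(B) < s$, and the strict inequality $\rho(B') < s' = \log s$ is equivalent to $\log(s - \rho(B)) > 0$, i.e.\ $s - \rho(B) > 1$, which is exactly the hypothesis; finally $s' = \log s > 0$ follows from $s > 1 + \rho(B) > 1$.

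The main obstacle I anticipate is the spectral step: the algebra is routine, but one must invoke a Perron-Frobenius type result for eventually nonnegative matrices to guarantee that $\rho(B)$ is indeed attained as an eigenvalue of $B$, so the identification of $\rho(B')$ via the Mercator series is legitimate. This in turn may require irreducibility or an index-zero assumption analogous to the one already used in Lemma~\ref{lemma21}. Once that is in place, everything else follows by routine adaptation of Lemma~\ref{lemmaLog}.
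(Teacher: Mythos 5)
Your proof is correct and takes the same route as the paper, whose entire argument for this corollary is the single sentence that it ``directly follows the steps in the Proof of the above Lemma with allowance for the definition and properties of an EM-matrix.'' Your multinomial expansion of $(B')^m$ with $B' = \sum_{k\ge 1} B^k/(k s^k)$, showing $k_0(B') \le k_0(B)$, makes explicit precisely the one step the paper leaves implicit, and the spectral bookkeeping matches Lemma~\ref{lemmaLog}, so nothing further is needed.
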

\begin{proof}
The proof directly follows the steps in the Proof of the above Lemma with allowance for the definition and properties of an EM-matrix. $\blacksquare$
\end{proof}

Now we can prove the Proposition~\ref{NegLambdaBeta1}. As same discretization of $z(\triangledown)$ as in Propositions~\ref{propNIG+} is used all we need is to prove two statements:
\begin{enumerate}
\item The discretization $D$ of $\log \dfrac{z(\triangledown)}{z(0)}$ should be an EM-matrix, if $D[z(\triangledown)]$ is an EM-matrix.

\item The real parts of the eigenvalues $\lambda_i(D[M]), \ i \in [1,N]$ should be negative, where $D[M]$ is the discretization of $M$.
\end{enumerate}

\paragraph{Proof of 1.} Consider the eigenvalues of $Z$ which could be found using \eqref{eigM}
\begin{equation*}
\lambda_i(Z) = \delta \left[(\alpha^2 - \beta^2) + 3 \dfrac{|\beta|}{h} + \dfrac{4}{h^2} \sin^2 \dfrac{i \pi}{2(N+1)}\right]^{1/2}
, \quad i \in [1,N].
\end{equation*}
From here
\[ \rho(Z) = \max_i \lambda_i > \delta \dfrac{\pi}{h}  > 1 \]
\noindent if $h < \delta \pi$.

Matrix $Z$ by construction is an EM-matrix, see Proof to Proposition~\ref{Beta1}. Therefore, it can be represented in the form $Z = sI - B$. Then
\[ 1 < \rho(Z) = s - \rho(B). \]
Now we are under assumptions of the Corollary~\ref{cor}, therefore  $\log{Z}$ is an EM-matrix.

As $\log z(0)I$ is a non-negative diagonal matrix, matrix $M_z = \log \left[ Z(z(0)I)^{-1} \right]$ is an EM-matrix. That is because $Z$ is an EM-matrix, and multiplication of $Z$ by $(z(0)I)^{-1}$ (which is a diagonal matrix with the diagonal elements $1/(\alpha^2 - \beta^2)$)
changes only the diagonal elements of $Z$. In other words, the diagonal elements of $M_z$ are $\lambda_i(Z)/(\alpha^2 - \beta^2) > 1$. $\blacksquare$.

\paragraph{Proof of 2.} The second property follows from the fact that the eigenvalues of $D[z(\triangledown)]$ are positive, see Proposition~\ref{Beta1}. If so, the principal matrix logarithm exists and is well-defined. Based on asymptotic properties of the Toeplitz matrices the eigenvalues of $D[M]$ are asymptotically equal to the sum of the eigenvalues of $D[z(0) - z(\triangledown)]$  and $D[- (\lambda+1/2)\log (z(\triangledown)/z(0))]$, see Proof to Proposition~\ref{Beta1} in Appendix. Also, based on \eqref{eigM} the eigenvalues $\Lambda_i$ of $D[z(\triangledown)]$ in the leading term are proportional to $1/h$. Therefore, in the leading term
\[ \lambda_i(D[M]) = - \delta \left(\dfrac{{\bar \Lambda}_i }{h} + (\lambda+1/2)\log \dfrac{{\bar \Lambda}_i}{h (\alpha^2 - \beta^2)} \right) \]
\noindent where ${\bar \Lambda}_i > 0$ is a part of $\Lambda_i$ which in the leading term doesn't depend on $h$. Now observe that the inequality
\[ \dfrac{1}{h} {\bar \Lambda}_i + (\lambda+1/2) \log \dfrac{ {\bar \Lambda_i}}{h(\alpha^2 - \beta^2)}  > 0 \]
\noindent with $\kappa = -(\lambda+1/2) > 0$ could be transformed to
\[ \dfrac{1}{h} {\bar \Lambda}_i - \kappa \log \dfrac{ {\bar \Lambda_i}}{h}  > 0 \]
if $b = \log (\alpha^2 - \beta^2) > 0$. It is always valid if $\kappa < e$, valid at
$h < {\bar \Lambda}_i/e $ if $\kappa = e$, and at $h < {\bar \Lambda}_i e^{W(-1/\kappa)}$ at $\kappa > e$, where $W_k(y)$ is the Lambert W-function, see \cite{HMF2010}.

If $b < 0$ we need to consider the  inequality
\[ \dfrac{1}{h} {\bar \Lambda}_i - \kappa \log \dfrac{ {\bar \Lambda_i}}{h}  > - \kappa b > 0 \]
\noindent which solution reads
\[ h < - \dfrac{{\bar \Lambda}_i}{\kappa W\left( - \dfrac{e^{-b^2/\kappa}}{\kappa } \right) }
\]
$\blacksquare$

Thus in both cases $b > 0$ and $b < 0$ there exists an upper boundary on the space step $h$ which, however, doesn't depend on step in time $\Delta \tau$. Therefore, in this sense the proposed scheme  is unconditionally stable in $h$ starting from some $h$ given in the solutions of the above inequalities. Numerical calculations show that this upper limit is not very restrictive unless we consider an extreme case when $\alpha \approx |\beta|$.

Combining all the above we conclude that $\lambda_i(D[M]) < 0, \ i \in [1,N]$. Therefore, the eigenvalues of the operator $e^{M}$ are nonnegative and $\lambda_i(D[e^{M}]) < 1, \ i \in [1,N]$.

The last point to prove is that matrix $M_2 = I + (4 \lambda^2 - 1)Z^{-1}$ is a nonnegative matrix with all positive eigenvalues less than one in the absolute value. This follows from the fact that: i) Z is an EM-matrix, ii) an inverse of the EM-matrix is an eventually nonnegative matrix, iii) all eigenvalues of $Z$ are positive, therefore so are the eigenvalues of $Z^{-1}$; iv) the eigenvalues of $Z$ are less than one, $\lambda(Z)_i > 1, \ \forall i \in [1,N]$, therefore $\lambda(Z^{-1})_i < 1, \ \forall i \in [1,N]$. All this properties were already proven in Proposition~\ref{Beta1}.

The entire statement of the Proposition now follows because the product of two nonnegative matrices is a nonnegative matrix. Also as eigenvalues of both matrices $M$ and $M_2$ are positive and less than one, consecutive application of them produces a convergent transformation with the same properties of the eigenvalues of the operator product. This finalizes the prove.

\section{Proof of Proposition~\protect{\ref{propMeixner}}}
First observe that the operators $M_n, \ n=1,2,...$ has the same structure as the operator $M$ in the beginning of the proof of Proposition~\ref{Beta1}. Therefore, $M_n$ is an EM-matrix, and $M_n^{-\kappa}$ with $\kappa > 0$ is the nonnegative matrix.

Second order approximation follows from the fact that $A^C_2$ is the second order central difference approximation of the second derivative, and $A_2^B$ is the second order one-sided approximation of the first derivative.

The next point to prove is that all eigenvalues $\lambda_{n,i}$ of an $M_n^{-\kappa}$ have positive real parts and obey the condition $|\lambda_{n,i}| < 1, \ \forall i \in [1,N]$. This also directly follows from the proof of Proposition~\ref{Beta1}.

Thus, at every $n$ the map $M_n^{-\kappa}: z_{n-1}(x,\tau) \rightarrow z_{n}(x,\tau) = M_n^{-\kappa} z_{n-1}(x,\tau)$ preserves positivity of the vector $z_{n}(x,\tau)$ and is convergent in the spectral norm as all $|\lambda_i|_n < 1, \ \forall i \in [1,N]$. 

\section{Proof of Proposition~\protect{\ref{propMeixner1}}}
The proof is analogous to the previous Proposition with allowance for the fact that matrix $2 a b A_2^B$ with $\beta < 0$ is the transpose of the matrix $2 a b A_2^F$ with $\beta \ge 0$.

\end{document}